\numberwithin{equation}{section}
\theoremstyle{plain}
\newtheorem{theorem}{Theorem}
\numberwithin{theorem}{section}
\newtheorem{proposition}[theorem]{Proposition}
\newtheorem{corollary}[theorem]{Corollary}
\theoremstyle{definition}
\newtheorem{example}[theorem]{Example}
\newtheorem{remark}[theorem]{Remark}
\newcommand{\<}{\langle}
\renewcommand{\>}{\rangle}
\renewcommand{\(}{\left(}
\renewcommand{\)}{\right)}
\newcommand\Cb{\mathds{C}}
\newcommand\Eb{\mathds{E}}
\newcommand\Fb{\mathds{F}}
\newcommand\Pb{\mathds{P}}
\newcommand\Rb{\mathds{R}}
\newcommand\Nb{\mathds{N}}
\newcommand\Fc{\mathscr{F}}
\newcommand\Nc{\mathscr{N}}
\newcommand\Oc{\mathscr{O}}
\newcommand\eps{\varepsilon}
\newcommand\om{\omega}
\newcommand\Om{\Omega}
\newcommand\sig{\sigma}
\newcommand\Gam{\Gamma}
\newcommand\Del{\Delta}
\newcommand\Fv{\mathbf{F}}
\newcommand\fh{\widehat{f}}
\newcommand\Hh{\widehat{H}}
\renewcommand\d{\partial}
\newcommand\ii{\mathtt{i}}
\newcommand\dd{\mathrm{d}}
\newcommand\ee{\mathrm{e}}
\newcommand\ko{\mathrm{ko}}
\newcommand\ki{\mathrm{ki}}
\newcommand\rb{\mathrm{rb}}
\newcommand{\Ib}[1]{\mathds{1}_{\{#1\}}}
\renewcommand\Re{\operatorname{Re}}
\renewcommand\Im{\operatorname{Im}}
\begin{document}

\title{Robust replication of barrier-style claims on price and volatility}

\author{
Peter Carr
\thanks{
Courant Institute, New York University.
\textbf{e-mail}: \url{pcarr@nyc.rr.com}
}
\and
Roger Lee
\thanks{
Department of Mathematics, University of Chicago.
\textbf{e-mail}: \url{rogerlee@math.uchicago.edu }
}
\and
Matthew Lorig
\thanks{
Department of Applied Mathematics, University of Washington.
\textbf{e-mail}: \url{mlorig@uw.edu}}
}

\date{This version: \today}

\maketitle

\begin{abstract}
We show how to price and replicate a variety of barrier-style claims written on the $\log$ price $X$ and quadratic variation $\<X\>$ of a risky asset.  Our framework assumes no arbitrage, frictionless markets and zero interest rates.  We model the risky asset as a strictly positive continuous semimartingale with an independent volatility process.  The volatility process may exhibit jumps and may be non-Markovian.  As hedging instruments, we use only the underlying risky asset, zero-coupon bonds, and European calls and puts with the same maturity as the barrier-style claim.  We consider knock-in, knock-out and rebate claims in single and double barrier varieties.
\end{abstract}

\noindent
\textbf{Key words}: robust pricing, robust hedging, knock-in, knock-out, rebate, barrier, quadratic variation

%
%

\section{Introduction}
Barrier options are the most liquid of the second generation options, (i.e., options whose payoffs are path-dependent).  In his landmark work, \cite{merton1973theory} first valued a down-and-out call in closed form when the underlying stock follows geometric Brownian motion. So long as the instantaneous volatility is a known function of the stock price and time, one can replicate any barrier claim by dynamically trading the stock and a zero-coupon bond.  If the volatility process is a continuous stochastic process driven by a second independent source of uncertainty, then one must also dynamically trade an option.  As with any hedge, the hedging strategy is invariant to the expected rate of return of the underlying stock.
\par
\cite{bowie-carr-1994} show how a static hedge in European options can be used to hedge down-and-out calls on futures in the Black model.  Essentially, the payoff of a down-and-out call with barrier $H$ can be replicated by buying a European call on the same underlying futures price with the same maturity $T$ and strike $K$ and also selling $K/H$ puts with strike $H^2/K$.  \cite{carr1998static} make clear that this static hedge works in any model with deterministic local volatility, provided that the volatility function is symmetric in the log
of the futures price relative to the barrier. We thus see a continuation of the pattern initiated by Merton in which hedging strategies are invariant to aspects of the statistical process.
\par
In \cite{carr1998static}, increments in the instantaneous volatility process are conditionally perfectly correlated with increments in the underlying futures price.  \cite{andreasen-2001} points out that the above hedge also works when increments in the instantaneous volatility process are conditionally independent of increments in the forward price. Similarly, \cite{bates1997skewness} observes that the above hedge works for ``Hull and White-type stochastic volatility processes.''  While Bates does not define this terminology, it seems reasonable to assert that both authors are assuming that the instantaneous volatility process is a diffusion, i.e., that the volatility process is continuous over time and has the strong Markov property, as in \cite{hullwhite1987}. Furthermore, as in \cite{hullwhite1987}, the volatility process should be autonomous in that its evolution coefficients refer only to volatility and time, but not the price of the underlying asset.
\par
\cite{pcs} make clear that these conditions are merely sufficient, but not necessary. 
The hedge described above for a down-and-out call works perfectly provided that there are no jumps over the barrier and the call and put have the same implied volatility at the first passage time to the barrier, if any. We refer to the latter condition as Put Call Symmetry (PCS), which was introduced to finance by \cite{bates1988} as a way to measure skewness. Hence, the bivariate process for the futures price and its volatility need not be Markov in itself. Furthermore, jumps in price and volatility can occur and increments in volatility can be correlated with returns, although some restrictions are necessary. As a result, we refer to these hedge strategies as \emph{semi-robust}.
\par
Barrier options are not the only path-dependent claims for which semi-robust hedges exist.  All of the barrier option hedges also extend to lookbacks. For lookbacks, the hedge is semi-static in that standard options are traded each time a new maximum is reached. Furthermore, assuming only no arbitrage, frictionless markets, zero interest rates, a positive continuous futures price process, and an independent
volatility process, \cite{rrvd} show how to replicate a variety of claims on 
the quadratic variation of returns experienced between initiation and a fixed maturity date. Their hedging instruments consist of the underlying futures and European options written on these futures at all strikes and with the same maturity as the claim.  In contrast to the hedges of barrier and lookback claims, their trading strategy in options is fully dynamic. Examples of claims whose payoffs can be spanned include volatility swaps and options on realized variance.
\par
The purpose of this manuscript is to synthesize the literature on semi-robust hedging of barrier claims and claims linked to quadratic variation. In particular, we show how to price and hedge claims on the $\log$ price $X$ and the quadratic variation $\<X\>$ of a risky asset subject to certain barrier events either occurring or not occurring.  
Examples of such claims include 
(i) a barrier start variance or volatility swap for which the non-negative payoff is the variance or volatility of $\log$ price experienced between the first passage time and a fixed maturity date,
(ii) a barrier start claim whose final payoff is the realized Sharpe ratio calculated between the first passage time and the fixed maturity date,
(iii) single and double barrier knock-out claims that, in the event no knock-out occurs, pay the product of powers and exponentials of log price and quadratic variation, and
(iv) a single barrier rebate claim that pays the product of powers and exponentials of quadratic variation if and when a barrier is reached prior to maturity.
\par
Our analysis makes the same assumptions as \cite{rrvd}. In particular, we consider a continuous time stochastic process for instantaneous volatility whose increments are uncorrelated with returns. Jumps in the volatility process are allowed and the evolution coefficients of the volatility process 
can refer to past or present values of the instantaneous volatility, time, and other variables as well, provided that they are independent of the futures price (i.e., non-Markovian dynamics are allowed for the volatility process).
Both foreign exchange and bond markets exhibit symmetric smiles, which, in a stochastic volatility setting, implies a volatility process that is uncorrelated with returns of the underlying \cite[Theorem 3.4]{pcs}.  Thus, our results are particularly relevant for these markets.
\par
The rest of this paper proceeds as follows.
In Section \ref{sec:model}, we introduce a general market model for a single risky asset $S = \ee^X$ and state our main assumptions.  
In Section \ref{sec:euro} we review and extend the results from \cite{rrvd} for pricing and replicating claims on $(X_T,\<X\>_T)$.  
These results will be needed for the barrier-style claims considered in Sections \ref{sec:ko}, \ref{sec:sbki} and \ref{sec:sbr}.
Section \ref{sec:ko} focuses on knock-out claims,
Section \ref{sec:sbki} examines knock-in claims
and Section \ref{sec:sbr} studies rebate claims.
Concluding remarks and directions for future research are offered in Section \ref{sec:conclusion}.

%
%

\section{Model and assumptions}
\label{sec:model}
We consider a frictionless market (i.e., no transaction costs) and fix an arbitrary but finite time horizon $T<\infty$.  For simplicity, we assume zero interest rates, no arbitrage, and take as given an equivalent martingale measure (EMM) $\Pb$ chosen by the market on a complete filtered probability space $(\Om,\Fc,\Fb,\Pb)$.  The filtration $\Fb=(\Fc_t)_{0 \leq t \leq T}$ represents the history of the market.  All stochastic processes defined below live on this probability space and all expectations are with respect to $\Pb$ unless otherwise stated.
\par
Let $B = (B_t)_{0 \leq t \leq T}$ represent the value of a zero-coupon bond maturing at time $T$.  As the risk-free rate of interest is zero by assumption, we have $B_t = 1$ for all $t \in [0,T]$.  Let $S = (S_t)_{0 \leq t \leq T}$ represent the value of a risky asset.  We assume $S$ is strictly positive and has continuous sample paths.  To rule out arbitrage, it is well-known that the asset $S$ must be a martingale under the pricing measure $\Pb$.  As such, there exists a non-negative, $\Fb$-adapted stochastic process $\sig = (\sig_t)_{0 \leq t \leq T}$ such that
\begin{align}
\dd S_t
	&=	\sig_t S_t \dd W_t , &
S_0
	&>	0 , 
\end{align} 
where $W$ is a Brownian motion with respect to the pricing measure $\Pb$ and the filtration $\Fb$.  Henceforth, the process $\sig$ will be referred to as the \emph{volatility process}.  
We assume that the volatility process $\sig$ is right-continuous and $\Fb$-adapted,
that it evolves independently of $W$ and that it satisfies
\begin{align}
\int_0^T \sig_t^2 \dd t
	&< c < \infty , \label{eq:bound}
\end{align}
for some arbitrarily large but finite constant $c>0$.   Note that $\sig$ may experience jumps and is not required to be Markovian.  
Define the log price process $X = (X_t)_{0 \leq t \leq T}$ by
\begin{align}
X_t
	&=	\log S_t .
\end{align}
As $S>0$, the process $X$ is well-defined and finite for all $t \in [0,T]$.  By It\^o's Lemma, 
\begin{align}
\dd X_t
	&=	-\tfrac{1}{2} \sig_t^2 \dd t + \sig_t \dd W_t , &
X_0
	&=	\log S_0 . \label{eq:dX}
\end{align}
Note that a claim on (the path of) $S$ can always be expressed as a claim on (the path of) $X = \log S$. 
\par
For any $\Fb$-stopping time $\tau$, define its $T$-bounded counterpart, the stopping time
\begin{align}
\tau^*
	&:= \tau \wedge T .
\end{align}
Let $C_{\tau^*}(K)$ denote the time $\tau^*$ price of a European call written on $S$ with maturity date $T$ and strike price $K > 0$, and let $P_{\tau^*}(K)$ denote the price of a European put written on $S$ with the same strike and maturity.  By no-arbitrage arguments, 
\begin{align}
C_{\tau^*}(K)
	&=	\Eb_{\tau^*} (S_T - K)^+  = \Eb_{\tau^*} (\ee^{X_T} - K)^+ , &
P_{\tau^*}(K)
	&=	\Eb_{\tau^*} (K - S_T)^+ = \Eb_{\tau^*}  (K - \ee^{X_T})^+ , \label{eq:call.put}
\end{align}
where we have introduced the shorthand notation $\Eb_{\tau^*} \, \cdot \, := \Eb[ \, \cdot \, | \Fc_{\tau^*}]$.
For convenience, we will sometimes refer to a European call or put written on $X$ rather than $S$ with the understanding that these are equivalent.
We assume that a European call or put with maturity $T$ trades at every strike $K \in (0,\infty)$.  As demonstrated by \cite{breeden}, this assumption is equivalent to knowing the distribution of $X_T$ under $\Pb$.  This assumption additionally guarantees, as \cite{carrmadan1998} show, that any $T$-maturity European claim on $X_T$ can perfectly hedged with a static portfolio of the bonds $B$, shares of the underlying $S$ and calls and puts.  Although in reality, calls and puts trade at only finitely many strikes, our results retain relevance; \cite{leung-lorig} show how to  adjust static hedges optimally when calls and puts are traded at only discrete strikes in a finite interval.

%
%

\section{European-style claims}
\label{sec:euro}
Under the assumptions of Section \ref{sec:model}, \cite{rrvd} show how to price and replicate the real and imaginary parts of a claim with a payoff of the form $\ee^{\ii \om X_T + \ii s \<X\>_T}$, where $\om,s \in \Cb$.  They then use these \emph{exponential claims} as building blocks to price and replicate more general claims with payoffs of the form $\varphi(X_T,\< X \>_T)$.  In this section, we briefly review the main results from \cite{rrvd} and derive some extensions needed in subsequent sections.
\par
Throughout this paper, we will distinguish between \emph{European} claims, which have path-\emph{independent} payoffs of the form $\varphi(X_T)$, and \emph{European-style} claims, which have path-\emph{dependent} payoffs of the form
\begin{align}
\text{European-style}:&&
\varphi(X_T,\<X\>_T) .
\end{align}
We use the phrase ``European-style'' to indicate that a claim payoff depends only on the terminal values $X_T$ and $\<X\>_T$ and not on any barrier event (e.g., knock-in or knock-out).


\subsection{Pricing and replicating power-exponential payoffs}
In what follows, we shall consider claims with $\Cb$-valued payoffs.  
The pricing and hedging results are understood to hold for the real and imaginary parts separately.
We begin by relating the characteristic function of $(X_T,\<X\>_T)$ to the characteristic function of $X_T$ only.

\begin{theorem}
\label{thm:E1=E2}
Let $\om,s \in \Cb$.  Define $u:\Cb^2 \to \Cb$ as either of the following
\begin{align}
u \equiv u_\pm(\om,s)
	&=	\ii \( - \tfrac{1}{2} \pm \sqrt{\tfrac{1}{4} - \om^2 - \ii \om + 2 \ii s} \) . \label{eq:u}
\end{align}
Then, for any $\Fb$-stopping time $\tau$, we have
\begin{align}
\Eb_{\tau^*} \ee^{\ii \om X_T + \ii s \<X\>_T} 
	&=	\ee^{\ii (\om-u) X_{\tau^*} + \ii s \<X\>_{\tau^*}}\Eb_{\tau^*} \ee^{\ii u X_T } . \label{eq:E1=E2}
\end{align}
\end{theorem}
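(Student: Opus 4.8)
The plan is to freeze the trajectory of the volatility process. Once the path $(\sig_r)_{0\le r\le T}$ is known, $X$ is a Gaussian process with an explicit conditional law for $X_T-X_t$, and the asserted identity reduces to matching two Gaussian exponential moments; the point is that the exponent in those moments is \emph{quadratic} in the transform variable, and $u_\pm$ of \eqref{eq:u} are exactly the two roots of the relevant quadratic. Concretely, I would first fix a deterministic $t\in[0,T]$ and let $\Gc_t$ be the $\sig$-algebra generated by $\Fc_t$ together with the full volatility path $(\sig_r)_{0\le r\le T}$. Since $\sig$ evolves independently of the $\Fb$-Brownian motion $W$, the increments $(W_r-W_t)_{t\le r\le T}$ are independent of $\Gc_t$, while $\sig$ is $\Gc_t$-measurable; hence, conditionally on $\Gc_t$, the variable $\int_t^T\sig_r\dd W_r$ is centered Gaussian with variance $v:=\<X\>_T-\<X\>_t$. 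Using $X_T=X_t-\tfrac12 v+\int_t^T\sig_r\dd W_r$, $\<X\>_T=\<X\>_t+v$ and $\Eb[\ee^{\lam N}]=\ee^{\lam^2 v/2}$ for a centered Gaussian $N$ of variance $v$ (valid for all $\lam\in\Cb$), one obtains
\begin{align}
\Eb[\ee^{\ii\om X_T+\ii s\<X\>_T}\mid\Gc_t] &= \ee^{\ii\om X_t+\ii s\<X\>_t}\,\ee^{v(\ii s-\ii\om/2-\om^2/2)}, \\
\Eb[\ee^{\ii u X_T}\mid\Gc_t] &= \ee^{\ii u X_t}\,\ee^{v(-\ii u/2-u^2/2)}.
\end{align}

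Next I would match exponents. Definition \eqref{eq:u} is equivalent to $u^2+\ii u=\om^2+\ii\om-2\ii s$, that is, to the equality $\ii s-\ii\om/2-\om^2/2=-\ii u/2-u^2/2$ of the two $v$-coefficients above; indeed \eqref{eq:u} simply displays the two roots of this quadratic in $u$, the sign $\pm$ corresponding to the choice of branch of the square root. Hence the two identities above combine into $\Eb[\ee^{\ii\om X_T+\ii s\<X\>_T}\mid\Gc_t]=\ee^{\ii(\om-u)X_t+\ii s\<X\>_t}\,\Eb[\ee^{\ii u X_T}\mid\Gc_t]$. Taking $\Eb[\,\cdot\mid\Fc_t]$ of both sides, using the tower property ($\Fc_t\subseteq\Gc_t$) and pulling out the $\Fc_t$-measurable factor $\ee^{\ii(\om-u)X_t+\ii s\<X\>_t}$, this establishes \eqref{eq:E1=E2} at every \emph{deterministic} time $t\in[0,T]$.

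It then remains to upgrade this from deterministic times to the stopping time $\tau^*$, which I would do by a martingale argument. By Dambis--Dubins--Schwarz, $\int_0^\cdot\sig_r\dd W_r$ is a Brownian motion run for the bounded time $\<X\>_T\le c$ (recall \eqref{eq:bound}), so $X_T$ has finite exponential moments of all orders; together with $|\ee^{\ii s\<X\>_T}|\le\ee^{|\Im s|c}$ and $|\ee^{\ii u X_T}|\le\ee^{|\Im u|\,|X_T|}$, this makes $\ee^{\ii\om X_T+\ii s\<X\>_T}$ and $\ee^{\ii u X_T}$ integrable. Consequently $A_t:=\Eb_t\,\ee^{\ii\om X_T+\ii s\<X\>_T}$ is a uniformly integrable $\Fb$-martingale on $[0,T]$, and by the previous paragraph its right-continuous modification agrees with $B_t:=\ee^{\ii(\om-u)X_t+\ii s\<X\>_t}\,\Eb_t\,\ee^{\ii u X_T}$, so $B$ is a version of the martingale $A$. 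Optional sampling at the bounded stopping time $\tau^*$, applied to $A$ and separately to the uniformly integrable martingale $t\mapsto\Eb_t\,\ee^{\ii u X_T}$, then yields $A_{\tau^*}=B_{\tau^*}$, which is precisely \eqref{eq:E1=E2}.

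The computations in the first two paragraphs are routine manipulations of the complex Gaussian transform; the two points I would single out as the real content are (i) the algebraic identity $u^2+\ii u=\om^2+\ii\om-2\ii s$ behind \eqref{eq:u}, which is where the square root and the $\pm$ originate and where one must track branches carefully, and (ii) the passage from deterministic times to $\tau^*$. I expect (ii) to be the only step needing genuine care, and the reason for proving \eqref{eq:E1=E2} first at deterministic $t$ — where conditioning on $\Gc_t$ is elementary — and only then applying optional sampling is exactly to sidestep any delicate conditioning at a stopping time inside the enlarged filtration.
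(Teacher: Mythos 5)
Your proof is correct and shares its algebraic core with the paper's: condition on the full volatility path, use that $X_T$ is conditionally Gaussian, and observe that $u_\pm$ in \eqref{eq:u} are exactly the roots of $u^2+\ii u=\om^2+\ii\om-2\ii s$, which equates the two Gaussian exponents. The genuine difference is how you reach the stopping time. The paper conditions directly on $\Fc_{\tau^*}\vee\Fc_T^\sig$ and asserts the conditional normal law of $X_T-X_{\tau^*}$ at the random time $\tau^*$; this is true but implicitly invokes the strong Markov property of $W$ inside the enlarged filtration. You instead prove the identity at deterministic $t$, where the required independence of $(W_r-W_t)_{r\ge t}$ from $\Fc_t\vee\Fc_T^\sig$ follows from the elementary fact that $W$ has independent increments and $W\ind\sig$, and then transfer to $\tau^*$ by observing that both sides define the same uniformly integrable right-continuous martingale and applying optional sampling. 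Your route trades a one-line assertion for a short, self-contained martingale argument, which is a cleaner way to justify the step at a stopping time; the paper's route is more compact but leans on a conditioning lemma it does not spell out. Both are valid proofs of the same theorem.

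One small point worth keeping in mind if you write this up formally: the claimed independence of the post-$t$ increments of $W$ from $\Gc_t=\Fc_t\vee\Fc_T^\sig$ is not a consequence of the two \emph{pairwise} independences alone; you need the product structure (e.g.\ $\Fb$ being, up to augmentation, the filtration generated jointly by $W$ and $\sig$), so that $(W_r-W_t)_{r\ge t}$ is jointly independent of $\bigl((W_r)_{r\le t},(\sig_r)_{r\le T}\bigr)$. This is the standard setting and is also implicit in the paper, but it is the one hypothesis the argument actually uses.
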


\begin{proof}
A proof of Theorem \ref{thm:E1=E2} is given in \cite[Proposition 5.1]{rrvd}.  We repeat it here as the conditioning arguments below will be used in subsequent sections.  Let $\Fc_T^\sig$ denote the sigma-algebra generated by $(\sig_t)_{0 \leq t \leq T}$.  Then $(\<X\>_T - \<X\>_{\tau^*}) \in \Fc_{\tau^*} \vee \Fc_T^\sig$ and
\begin{align}
X_T - X_{\tau^*} | \Fc_{\tau^*} \vee \Fc_T^\sig
	&\sim \Nc( m , v^2 ) , &
m
	&=	-\tfrac{1}{2} ( \<X\>_T - \<X\>_{\tau^*} ) , &
v^2
	&=	\<X\>_T - \<X\>_{\tau^*} . \label{eq:X.normal}
\end{align}
Thus, by the characteristic function of a normal random variable
\begin{align}
Z
	&\sim \Nc( m , v^2 ) , &
\Eb \ee^{\ii \om Z}
	&=	\ee^{\ii m \om - \tfrac{1}{2} v^2 \om^2} , \label{eq:char.Z}
\end{align}
we have
\begin{align}
\Eb_{\tau^*} \ee^{ \ii \om (X_T - X_{\tau^*}) + \ii s (\<X\>_T - \<X\>_{\tau^*})} 
	&=	\Eb_{\tau^*} \ee^{\ii s (\<X\>_T - \<X\>_{\tau^*})} \Eb[ \ee^{ \ii \om (X_T - X_{\tau^*})}| \Fc_{\tau^*} \vee \Fc_T^\sig ] \\
	&=	\Eb_{\tau^*} \Eb[ \ee^{ (\ii s - ( \om^2 + \ii \om)/2 )(\<X\>_T - \<X\>_{\tau^*})} | \Fc_{\tau^*} \vee \Fc_T^\sig ] &
	&		\text{(by \eqref{eq:X.normal} and \eqref{eq:char.Z})}\\
	&=	\Eb_{\tau^*} \Eb[ \ee^{ ( - ( u^2 + \ii u)/2 )(\<X\>_T - \<X\>_{\tau^*})} | \Fc_{\tau^*} \vee \Fc_T^\sig ] &
	&		\text{(by \eqref{eq:u})}\\
	&=	\Eb_{\tau^*} \Eb[ \ee^{ \ii u (X_T - X_{\tau^*})}| \Fc_{\tau^*} \vee \Fc_T^\sig ] &
	&		\text{(by \eqref{eq:X.normal} and \eqref{eq:char.Z})} \\
	&=	\Eb_{\tau^*} \ee^{ \ii u (X_T - X_{\tau^*})}  . \label{eq:E1=E2.2}
\end{align}
Multiplying \eqref{eq:E1=E2.2} by $\ee^{\ii \om X_{\tau^*} + \ii s \<X\>_{\tau^*}}$ yields \eqref{eq:E1=E2}.
\end{proof}

\begin{corollary}
\label{cor:pow-exp}
Fix $\om,s \in \Cb$ and $n,m \in \{0\} \cup \Nb$. 
Assume $\frac{1}{4} - \ii \om + 2 \ii s - \om^2 \neq 0$.
Let $u : \Cb^2 \to \Cb$ be as defined in \eqref{eq:u}.  
Then
\begin{align}
\Eb_{\tau^*} X_T^n \<X\>_T^m \ee^{\ii \om X_T  + \ii s \<X\>_T }
	&=	\Eb_{\tau^*} \sum_{j=0}^n \sum_{k=0}^m \bigg( 
				\binom{n}{j} \binom{m}{k} 
				(-\ii \d_\om)^j (-\ii \d_s)^k \ee^{\ii (\om - u(\om,s))X_{\tau^*} + \ii s \<X\>_{\tau^*}} 
			\bigg) \\ & \quad 
			\times (-\ii \d_\om)^{n-j} (-\ii \d_s)^{m-k} \ee^{\ii u(\om,s) X_T} . \label{eq:name}
\end{align}
\end{corollary}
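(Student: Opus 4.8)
\textit{Proof proposal.} The plan is to obtain \eqref{eq:name} by differentiating the identity \eqref{eq:E1=E2} of Theorem~\ref{thm:E1=E2} in the parameters $\om$ and $s$ and then expanding with the Leibniz rule. The starting point is the pointwise identity, valid $\Pb$-a.s.,
\begin{align*}
X_T^n \<X\>_T^m \ee^{\ii \om X_T + \ii s \<X\>_T}
	&=	(-\ii \d_\om)^n (-\ii \d_s)^m \ee^{\ii \om X_T + \ii s \<X\>_T} .
\end{align*}
Applying $\Eb_{\tau^*}$ to both sides, I would move the operator $(-\ii\d_\om)^n(-\ii\d_s)^m$ outside the conditional expectation by differentiation under the integral sign. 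The relevant majorant bound is available because $0 \le \<X\>_T < c$ by \eqref{eq:bound}, so that $|\ee^{\ii s \<X\>_T}| = \ee^{-(\Im s)\<X\>_T}$ is bounded, while conditionally on $\Fc_T^\sig$ the random variable $X_T$ is Gaussian with variance at most $c$; hence $\Eb\,\ee^{\lambda X_T}<\infty$ for every real $\lambda$, and the family $\{\,|X_T|^n\,|\<X\>_T|^m\,\ee^{-(\Im\om)X_T-(\Im s)\<X\>_T}\,\}$ admits an $L^1(\Pb)$ majorant that is locally uniform in $(\om,s)$. The standard dominated-convergence argument then gives $\Eb_{\tau^*}X_T^n\<X\>_T^m\ee^{\ii\om X_T+\ii s\<X\>_T}=(-\ii\d_\om)^n(-\ii\d_s)^m\,\Eb_{\tau^*}\ee^{\ii\om X_T+\ii s\<X\>_T}$.

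Next I would invoke Theorem~\ref{thm:E1=E2} to replace $\Eb_{\tau^*}\ee^{\ii\om X_T+\ii s\<X\>_T}$ by the product $f\,g$, where $f=\ee^{\ii(\om-u(\om,s))X_{\tau^*}+\ii s\<X\>_{\tau^*}}$ and $g=\Eb_{\tau^*}\ee^{\ii u(\om,s)X_T}$. The non-degeneracy hypothesis $\tfrac14-\ii\om+2\ii s-\om^2\neq 0$ guarantees that the branch of the square root in \eqref{eq:u} is holomorphic in a neighborhood of $(\om,s)$, so $u(\om,s)$, and therefore both $f$ and $g$, are smooth (indeed holomorphic) in $(\om,s)$ near the point of interest; in particular the mixed partials exist and $\d_\om$, $\d_s$ commute on them. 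Since the two families of operators commute, the iterated bivariate Leibniz rule yields
\begin{align*}
(-\ii\d_\om)^n(-\ii\d_s)^m (fg)
	&=	\sum_{j=0}^n\sum_{k=0}^m \binom{n}{j}\binom{m}{k}\bigl((-\ii\d_\om)^j(-\ii\d_s)^k f\bigr)\bigl((-\ii\d_\om)^{n-j}(-\ii\d_s)^{m-k} g\bigr).
\end{align*}

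To finish, I would move $\Eb_{\tau^*}$ back inside the derivatives acting on $g=\Eb_{\tau^*}\ee^{\ii u X_T}$ — again differentiation under the integral sign, justified by the same majorant bound since $\Im u$ and the partials $\d_\om u,\d_s u$ are locally bounded by holomorphy — and then absorb the $\Fc_{\tau^*}$-measurable factor $(-\ii\d_\om)^j(-\ii\d_s)^k f = (-\ii\d_\om)^j(-\ii\d_s)^k\ee^{\ii(\om-u(\om,s))X_{\tau^*}+\ii s\<X\>_{\tau^*}}$ into $\Eb_{\tau^*}$. This produces exactly the right-hand side of \eqref{eq:name}.

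The step requiring the most care is the repeated use of differentiation under the (conditional) expectation sign — both in passing from the pointwise identity to the statement about $\Eb_{\tau^*}\ee^{\ii\om X_T+\ii s\<X\>_T}$ and in re-inserting $\Eb_{\tau^*}$ into the derivatives of $\Eb_{\tau^*}\ee^{\ii u X_T}$; everything else is bookkeeping via the Leibniz rule. The two facts that make this routine are the uniform bound \eqref{eq:bound} on $\<X\>_T$ (which bounds $\ee^{-(\Im s)\<X\>_T}$ and gives $X_T$ finite exponential moments) and the local holomorphy of $u(\om,s)$ ensured by the hypothesis $\tfrac14-\ii\om+2\ii s-\om^2\neq 0$.
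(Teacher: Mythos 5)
Your proposal is correct and follows essentially the same route as the paper's proof: write the polynomial factor as $(-\ii\d_\om)^n(-\ii\d_s)^m$ applied to the exponential, interchange with $\Eb_{\tau^*}$ via the dominated-convergence bound afforded by \eqref{eq:bound}, substitute Theorem~\ref{thm:E1=E2}, and expand with the Leibniz product rule. Your explicit remark that the hypothesis $\tfrac14-\ii\om+2\ii s-\om^2\neq 0$ is what keeps $u(\om,s)$ holomorphic (so the derivatives exist and commute) is a useful bit of added detail that the paper leaves implicit.
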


\begin{proof}
We have
\begin{align}
\Eb_{\tau^*} X_T^n \<X\>_T^m \ee^{\ii \om X_T  + \ii s \<X\>_T }
	&=	(-\ii \d_\om)^n (-\ii \d_s)^m \Eb_{\tau^*} \ee^{\ii \om X_T  + \ii s \<X\>_T } \\
	&=	(-\ii \d_\om)^n (-\ii \d_s)^m \ee^{\ii (\om - u(\om,s))X_{\tau^*} + \ii s \<X\>_{\tau^*}} \Eb_{\tau^*} \ee^{\ii u(\om,s) X_T} \\
	&=	\text{R.H.S. of \eqref{eq:name}} ,
\end{align}
where the first equality follows from the Leibniz integral rule, the second equality follows from Theorem \ref{thm:E1=E2}, and the last equality follows from the Leibniz integral rule and algebra.  The two applications of the Leibniz rule are justified as follows: for any $n,m \in \{ 0 \} \cup \Nb $ and $\om,s \in \Cb$ there exists a constant $c_1 > 0$ such that 
\begin{align}
| \d_\om^n \d_s^m \ee^{\ii \om x + \ii s v} | 
	&< c_1 \ee^{ c_1 (|x| + |v|)}, &
\Eb_0 c_1 \ee^{ c_1 (|X_T| + |\<X\>_T|)}
	&< \infty , \label{eq:leibniz}
\end{align}
where the finiteness of the expectation follows from \eqref{eq:bound}.
\end{proof}

\begin{remark}[Notation]
Throughout this manuscript, when it causes no confusion, we will omit the subscript $\pm$ and the arguments $(\om,s)$ from $u_\pm(\om,s)$ (and other functions/processes) in order to ease notation.
\end{remark}

We now recall a classical result from \cite{carrmadan1998}.  Suppose a function $f$ can be expressed as the difference of convex functions.  Then $f$ can be represented as a linear combination of call and put payoffs.  Specifically, for any $\kappa \in \Rb_+$ we have
\begin{align}
f(s)
	&=		f(\kappa) + f'(\kappa) \Big( (s - \kappa)^+ - (\kappa-s)^+ \Big)  
				+ \int_0^\kappa f''(K)(K-s)^+ \dd K   + \int_\kappa^\infty  f''(K)(s-K)^+\dd K . \label{eq:f}
\end{align}
Here, $f'$ is the left-derivative of $f$ and $f''$ is the second derivative, which exists as a generalized function.
Replacing $s$ in \eqref{eq:f} with the random variable $S_T$, choosing $\kappa = S_{\tau^*}$ and taking the $\Fc_{\tau^*}$-conditional expectation, one obtains
\begin{align}
\Eb_{\tau^*} f(S_T) 
	&=		f(S_{\tau^*}) B_{\tau^*} + \int_0^{S_{\tau^*}}   f''(K)P_{\tau^*}(K) \dd K + \int_{S_{\tau^*}}^\infty   f''(K) C_{\tau^*}(K) \dd K, \label{eq:f.call.put}
\end{align}
using $B_{\tau^*}=1$ and \eqref{eq:call.put}.  
Choosing $f$ so that $f(\ee^{X_T})$ is equal to the right-hand side of \eqref{eq:name} 
one obtains the price of a European-style power-exponential claim $\Eb_{\tau^*} X_T^n \<X\>_T^m \ee^{\ii \om X_T + \ii s \<X\>_T}$ in terms of (observable) European call and put prices.

Having priced European-style power-exponential claims relative to calls and puts, we turn to replication.  
Given a $\Cb^d$-valued price process $V = (V_t)_{0 \leq t \leq T}$, a $\Cb^d$-valued portfolio 
process $\Del = (\Del_t)_{0 \leq t \leq T}$ is said to be \emph{self-financing} if its value $\Pi$ satisfies
\begin{align}
\dd \Pi_t
	&=	\sum_{i=1}^d \Del_{t-}^i \dd V_t^i ,
&\text{where }
\Pi_t
	&:=	\sum_{i=1}^d \Del_t^i V_t^i  	
	 \label{eq:self.financing}
\end{align}
If both $\Del$ and $V$ are $\Rb^d$-valued, this definition corresponds to the usual notion of a self-financing portfolio.  The following theorem gives a self-financing replication strategy for European-style exponential claims.  
\begin{theorem}[Replication of European-style exponential claims]
\label{thm:complex}
Fix $\om,s, \in \Cb$ and define processes $N=(N_t)_{0 \leq t \leq T}$ and $Q=(Q_t)_{0 \leq t \leq T}$ by
\begin{align}
N_t \equiv N_t(\om,s)
	&:= \ee^{\ii(\om-u) X_t + \ii s \<X\>_t} , &
Q_t \equiv Q_t(\om,s)
	&:= \Eb_t \ee^{\ii u X_T} , \label{eq:NQ}
\end{align}
where $u \equiv u(\om,s)$ is as given in \eqref{eq:u}.  
Define $\Pi = (\Pi_t)_{0 \leq t \leq T}$ by
\begin{align}
\Pi_t \equiv \Pi_t(\om,s)
	&=	 N_t Q_t + \( \frac{ \ii (\om-u) N_t Q_{t-}}{S_t} \) S_t + \Big( - \ii (\om-u) N_t Q_{t-} \Big) B_t . \label{eq:Pi}
\end{align}
Then the portfolio $(N_t, \ii (\om-u) N_t Q_{t-}/S_t, - \ii (\om-u) N_t Q_{t-})$ of assets $(Q, S, B)$ is self-financing in the sense of \eqref{eq:self.financing}, and it has terminal value 
\begin{align}
\Pi_T
	&=	\ee^{\ii \om X_T + \ii s \<X\>_T } . \label{eq:Pi.T}
\end{align}
\end{theorem}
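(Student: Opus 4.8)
The plan is to read the three summands in \eqref{eq:Pi} as a portfolio holding, at time $t$, $N_t$ units of the European claim with price process $Q$ (itself statically replicable in bonds, shares, calls and puts, by \cite{carrmadan1998}), together with $\ii(\om-u)N_tQ_{t-}/S_t$ shares of $S$ and $-\ii(\om-u)N_tQ_{t-}$ bonds, and then to verify the two conditions in \eqref{eq:self.financing}. The first holds by the very definition \eqref{eq:Pi} of $\Pi_t$. For the terminal value, note that $B_t\equiv1$ and the share and bond holdings have opposite value, $\bigl(\ii(\om-u)N_tQ_{t-}/S_t\bigr)S_t+\bigl(-\ii(\om-u)N_tQ_{t-}\bigr)B_t=0$, so $\Pi_t=N_tQ_t$ for every $t$; taking $\tau=t$ in Theorem \ref{thm:E1=E2} identifies $N_tQ_t$ with $\Eb_t\ee^{\ii\om X_T+\ii s\<X\>_T}$, whence $\Pi$ is a $\Pb$-martingale and, at $t=T$ where $Q_T=\ee^{\ii uX_T}$, equation \eqref{eq:Pi.T} follows. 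Everything thus reduces to the dynamic requirement, which, since $\dd B_t=0$, reads
\begin{align}
\dd\Pi_t &= N_{t-}\,\dd Q_t+\frac{\ii(\om-u)N_{t-}Q_{t-}}{S_{t-}}\,\dd S_t. \label{eq:planSF}
\end{align}

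Next I would record the dynamics of $Q$. Conditioning on $\Fc_t\vee\Fc_T^\sig$ exactly as in the proof of Theorem \ref{thm:E1=E2} and using \eqref{eq:char.Z} yields the factorization $Q_t=G_tM_t$, where
\begin{align}
G_t &:=\ee^{\ii uX_t+\frac12(u^2+\ii u)\<X\>_t}, & M_t &:=\Eb_t\,\ee^{-\frac12(u^2+\ii u)\<X\>_T} ,
\end{align}
and, because $\sig$ evolves independently of $W$, $M$ is a martingale adapted to the filtration generated by $\sig$. Since $X$ and $\<X\>$ are continuous, It\^o's lemma applies to $G$ and its drift cancels identically, giving $\dd G_t=\ii u\,\sig_tG_t\,\dd W_t$; in particular $G$ is a local martingale. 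Because $S$ is continuous and driven by $W$, whereas $M$ is carried by the (independent) volatility filtration, $[S,M]=0$, hence $[G,M]=[W,M]=0$. Therefore $\dd Q_t=G_t\,\dd M_t+M_{t-}\,\dd G_t=G_t\,\dd M_t+\ii u\,\sig_tG_tM_{t-}\,\dd W_t$, and so $\dd[W,Q]_t=\ii u\,\sig_tG_tM_{t-}\,\dd t$.

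The last step is to expand $\Pi_t=N_tQ_t$ by the product rule. It\^o's lemma gives $\dd N_t=N_t\sig_t^2\bigl(\ii s-\tfrac12\ii(\om-u)-\tfrac12(\om-u)^2\bigr)\dd t+\ii(\om-u)N_t\sig_t\,\dd W_t$, so $\dd[N,Q]_t=\ii(\om-u)N_t\sig_t\,\dd[W,Q]_t=-(\om-u)u\,\sig_t^2N_tQ_{t-}\,\dd t$ (here $G_tM_{t-}=Q_{t-}$ since $G$ is continuous). Hence
\begin{align}
\dd\Pi_t=\dd(N_tQ_t)=N_{t-}\,\dd Q_t+\bigl(Q_{t-}\,\dd N_t+\dd[N,Q]_t\bigr) .
\end{align}
Now $\Pi$ and $Q$ are martingales and $\int N_{t-}\,\dd Q_t$ is a local martingale, so $Q_{t-}\,\dd N_t+\dd[N,Q]_t$ is a local martingale and therefore has no $\dd t$-part; equivalently — and this is where \eqref{eq:u} enters — collecting the $\dd t$-terms shows they cancel precisely when $\ii s=\tfrac12(\om-u)(\ii+\om+u)$, i.e.\ $u^2+\ii u=\om^2+\ii\om-2\ii s$, which is exactly the quadratic satisfied by $u=u_\pm(\om,s)$. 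Consequently $Q_{t-}\,\dd N_t+\dd[N,Q]_t=\ii(\om-u)N_tQ_{t-}\,\sig_t\,\dd W_t=\ii(\om-u)N_{t-}Q_{t-}S_{t-}^{-1}\,\dd S_t$, using $\dd S_t=\sig_tS_t\,\dd W_t$ and the continuity of $N$ and $S$; this is \eqref{eq:planSF}, completing the proof.

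I expect the genuinely delicate step to be the identification of the $W$-component of $\dd Q_t$: this is what forces the ``condition on the entire volatility path'' factorization $Q=GM$, and it needs a careful use of the independence of $\sig$ and $W$ together with the continuity of $S$ to annihilate every cross-variation with $M$, all without assuming $\sig$ continuous or Markovian. The rest is routine bookkeeping — promoting the relevant local martingales to martingales and justifying the interchange of differentiation and expectation — and rests on the uniform bound \eqref{eq:bound} on $\int_0^T\sig_t^2\,\dd t$, exactly as in the proof of Corollary \ref{cor:pow-exp}; the algebra collapses the moment the defining relation for $u$ is used.
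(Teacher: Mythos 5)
Your proof is correct and reaches the right conclusion, but it follows a genuinely different route from the paper's. The paper's proof never decomposes $Q$ or computes $\dd[N,Q]_t$ explicitly. Instead, after writing $\dd\Pi_t=N_t\,\dd Q_t+Q_{t-}\,\dd N_t+\dd[N,Q]_t$, it extracts only the $\dd S_t$ part of $Q_{t-}\,\dd N_t$ and lumps the entire remainder (the drift of $N$ plus the whole bracket $[N,Q]$) into a continuous finite-variation process $A$. Since $\Pi$, $Q$ and $S$ are all martingales, $A$ is a continuous, finite-variation local martingale starting at zero, hence $A\equiv 0$; the quadratic relation defining $u$ is never invoked in this step (it was already absorbed in establishing that $\Pi_t=\Eb_t\,\ee^{\ii\om X_T+\ii s\<X\>_T}$ is a martingale via Theorem \ref{thm:E1=E2}). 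You instead condition on the volatility path to get $Q=GM$, argue $[W,M]=[S,M]=[G,M]=0$ from the independence of $\sigma$ and $W$, compute $\dd[N,Q]_t$ and the drift of $N$ by hand, and show they cancel precisely because $u^2+\ii u=\om^2+\ii\om-2\ii s$. Your version is more constructive — it makes visible exactly where \eqref{eq:u} earns its keep in the drift cancellation — at the cost of the extra $Q=GM$ machinery and of needing to make rigorous the orthogonality $[W,M]=0$, which quietly uses that $\Fb$ is (at least effectively) $\Fb^W\vee\Fb^\sigma$; the paper's abstract ``continuous FV local martingale is constant'' argument avoids both.

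One small caveat: in the abstract aside where you say the local martingale $Q_{t-}\,\dd N_t+\dd[N,Q]_t$ ``therefore has no $\dd t$-part,'' you should note that this relies on the finite-variation part of its Doob–Meyer-type decomposition being continuous (hence predictable), so that a finite-variation local martingale started at zero must vanish; the paper makes this point explicitly and it is also what underlies your explicit algebraic verification, so the gap is cosmetic rather than substantive.
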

\begin{proof}
From \eqref{eq:Pi}, with $B_t = 1$, we have $\Pi_t = N_t Q_t$ at any time $t \in [0,T]$ .  
In particular, at the maturity date $T$, using \eqref{eq:NQ}, we have 
\begin{align}
\Pi_T 
	&=	  N_T Q_T 
	=		  \ee^{\ii ( \om - u) X_T + \ii s \<X\>_T} \Eb_T \ee^{\ii u X_T}
	=		  \ee^{\ii \om X_T + \ii s \<X\>_T } ,
\end{align}
which establishes \eqref{eq:Pi.T}.
To prove that $\Pi$ satisfies the self-financing condition \eqref{eq:self.financing}, observe that
\begin{align}
\Eb_t  \ee^{\ii \om X_T + \ii s \<X\>_T}
	&=	 \ee^{\ii(\om - u)X_t + \ii s \<X\>_t} \Eb_t \ee^{\ii u X_T}
	=		 N_t Q_t
	=		\Pi_t . \label{eq:martingale}
\end{align}
The left-hand side of \eqref{eq:martingale} is a martingale by iterated conditioning.  Thus, the process $\Pi$ must also be a martingale.  The process $Q$ must be a martingale by the same reasoning.  Next, 
\begin{align}
\dd \Pi_t
	&=	\dd (  N_t Q_t )
	=		 N_t \dd Q_t + Q_{t-} \dd N_t +  \dd[N,Q]_t \\
	&=	 N_t \dd Q_t + \( \frac{ \ii (\om-u) N_t Q_{t-}}{S_t} \) \dd S_t + \dd A_t ,
\end{align}
where $A = (A_t)_{0 \leq t \leq T}$ has finite variation.  As $\Pi$, $Q$ and $S$ are martingales, it follows that $A$ is a local martingale.  Moreover, as sample paths of $S$ are continuous, so too are the sample paths of $N$ and, hence, the sample paths of $A$.  As a finite variation, continuous local martingale, $A$ must be constant.  Thus $\dd A_t = 0$ and 
\begin{align}
\dd \Pi_t
	&=	N_t \dd Q_t + \( \frac{ \ii (\om-u) N_t Q_{t-}}{S_t} \) \dd S_t \\
	&=	N_t \dd Q_t + \( \frac{ \ii (\om-u) N_t Q_{t-}}{S_t} \) \dd S_t + \Big( -  \ii (\om-u) N_t Q_{t-} \Big) \dd B_t , \label{eq:d.Pi}
\end{align}
using $\dd B_t = 0$.  Comparing \eqref{eq:Pi} with \eqref{eq:d.Pi} establishes the self-financing condition.
\end{proof}

\begin{corollary}[Replication of European-style power-exponential claims]
Fix $\om,s, \in \Cb$ such that $2 \ii s-\omega ^2-\ii \om +\tfrac{1}{4} \neq 0$.
For any $n,m \in \{0\} \cup \Nb$ let the processes $N^{(n,m)}=(N_t^{(n,m)})_{0 \leq t \leq T}$ and $Q^{(n,m)}=(Q_t^{(n,m)})_{0 \leq t \leq T}$ be given by
\begin{align}
N_t^{(n,m)} \equiv N_t^{(n,m)}(\om,s)
	&:= (-\ii \d_\om)^n (-\ii\d_s)^m\ee^{\ii(\om-u) X_t + \ii s \<X\>_t} , \label{eq:N.nm} \\
Q_t^{(n,m)} \equiv Q_t^{(n,m)}(\om,s)
	&:= \Eb_t (-\ii \d_\om)^n (-\ii\d_s)^m \ee^{\ii u X_T} , \label{eq:Q.nm}
\end{align}
where $u \equiv u(\om,s)$ is as given in \eqref{eq:u}.  
Define the process $\Pi^{(n,m)}=(\Pi_t^{(n,m)})_{0 \leq t \leq T}$ by
\begin{align}
\Pi_t^{(n,m)} \equiv \Pi_t^{(n,m)}(\om,s)
	&=	 \sum_{j=0}^{n} \sum_{k=0}^{m} \binom{n}{j} \binom{m}{k} N_t^{(j,k)}  Q_t^{(n-j,m-k)}  \\ & \quad
				+ \Big( (-\ii \d_\om)^n (-\ii\d_s)^m  \frac{ \ii (\om-u) N_t Q_{t-}}{S_t} \Big) S_t \\ & \quad
				+ \Big( - (-\ii \d_\om)^n (-\ii\d_s)^m \ii (\om-u) N_t Q_{t-} \Big) B_t , \label{eq:Pi.2}
\end{align}
Then $\Pi^{(n,m)}$ is the value of a self-financing portfolio in the sense of \eqref{eq:self.financing} and satisfies
\begin{align}
\Pi_T^{(n,m)}
	&=	X_T^n \<X\>_T^m \ee^{\ii \om X_T + \ii s \<X\>_T} . \label{eq:Pi.T.2}
\end{align}
\end{corollary}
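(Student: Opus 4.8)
The plan is to obtain the Corollary from Theorem~\ref{thm:complex} by differentiating in the parameters $(\om,s)$, in the same way that Corollary~\ref{cor:pow-exp} followed from Theorem~\ref{thm:E1=E2}. Write $\Dc^{(n,m)}:=(-\ii\d_\om)^n(-\ii\d_s)^m$; the hypothesis $2\ii s-\om^2-\ii\om+\tfrac14\neq0$ enters here, ensuring that $u_\pm$ and all its $(\om,s)$-derivatives are finite and analytic. Comparing definitions, \eqref{eq:N.nm} and \eqref{eq:Q.nm} say that $N^{(n,m)}=\Dc^{(n,m)}N$ and $Q^{(n,m)}=\Dc^{(n,m)}Q$, and \eqref{eq:Pi.2} is precisely $\Dc^{(n,m)}$ applied to \eqref{eq:Pi}: the generalized Leibniz rule turns $\Dc^{(n,m)}(N_tQ_t)$ into $\sum_{j,k}\binom nj\binom mk N_t^{(j,k)}Q_t^{(n-j,m-k)}$, while the $S_t$- and $B_t$-holding terms of \eqref{eq:Pi} get differentiated through, since $S_t$ and $B_t\equiv1$ do not depend on $(\om,s)$. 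The statement then follows once one shows that $\Dc^{(n,m)}$ commutes with (i) evaluation at $t=T$, (ii) $\Eb_t$, and (iii) the stochastic integrals in the self-financing decomposition.

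Steps (i) and (ii) are routine. Applying $\Dc^{(n,m)}$ to the pathwise identity \eqref{eq:Pi.T}, and separately to $\ee^{\ii\om X_T+\ii s\<X\>_T}$, gives \eqref{eq:Pi.T.2}. The interchange $\Dc^{(n,m)}\Eb_t=\Eb_t\Dc^{(n,m)}$ is licensed by the domination bound \eqref{eq:leibniz} together with \eqref{eq:bound}, exactly as in the proof of Corollary~\ref{cor:pow-exp}; hence $\Pi_t^{(n,m)}=\Dc^{(n,m)}\Eb_t\ee^{\ii\om X_T+\ii s\<X\>_T}=\Eb_t[X_T^n\<X\>_T^m\ee^{\ii\om X_T+\ii s\<X\>_T}]$, and each $Q_t^{(n-j,m-k)}=\Eb_t[(-\ii\d_\om)^{n-j}(-\ii\d_s)^{m-k}\ee^{\ii uX_T}]$, are (true) martingales, the requisite integrability coming once more from \eqref{eq:bound}.

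For step (iii) I would follow the structure of the proof of Theorem~\ref{thm:complex}. Each $N^{(j,k)}$ is a continuous semimartingale (a polynomial in $X_t,\<X\>_t$ times $\ee^{\ii(\om-u)X_t+\ii s\<X\>_t}$, with $X$ and $\<X\>$ continuous), so It\^o's product rule gives $\dd\Pi_t^{(n,m)}=\sum_{j,k}\binom nj\binom mk\big(N_t^{(j,k)}\dd Q_t^{(n-j,m-k)}+Q_{t-}^{(n-j,m-k)}\dd N_t^{(j,k)}+\dd[N^{(j,k)},Q^{(n-j,m-k)}]_t\big)$. The first family of terms is retained verbatim as the increment of the proposed position in the traded claims $Q^{(n-j,m-k)}$. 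For the other two families I would use the representations $\dd N_t=\ii(\om-u)N_t\sig_t\dd W_t+u(\om-u)N_t\sig_t^2\dd t$ (from \eqref{eq:dX} and $\dd\<X\>_t=\sig_t^2\dd t$) and $[W,Q]_t=\int_0^t\ii uQ_{r-}\sig_r\dd r$ (obtained from the $W$-part of $Q$, exactly as in the proof of Theorem~\ref{thm:complex}); since $N^{(j,k)}$ is continuous, the covariation term sees only the $W$-driven part of $Q^{(n-j,m-k)}$, so after differentiating in $(\om,s)$---now against the parameter-free integrators $W$ and $\dd r$ only---and resumming via the Leibniz rule, these two families collapse to $\big(\Dc^{(n,m)}\tfrac{\ii(\om-u)N_tQ_{t-}}{S_t}\big)\sig_tS_t\,\dd W_t$ plus a continuous finite-variation process. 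As $\Pi^{(n,m)}$, the $Q^{(n-j,m-k)}$ and $S$ are all martingales, that finite-variation process is a continuous finite-variation martingale, hence constant; and $\sig_tS_t\,\dd W_t=\dd S_t$. Appending the identically-zero term $-\big(\Dc^{(n,m)}\ii(\om-u)N_tQ_{t-}\big)\dd B_t$ then reproduces \eqref{eq:self.financing} with the holdings as displayed in \eqref{eq:Pi.2}.

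The step I expect to demand the most care is the interchange underlying (iii), i.e.\ moving $\Dc^{(n,m)}$ inside the stochastic integrals. The key observation is that once every quadratic-covariation term is rewritten as an ordinary $\dd r$-integral using the $W$-representations of $N$ and of $Q$---essentially the computation already performed in the proof of Theorem~\ref{thm:complex}---the only integrators against which one ever differentiates are $W$, $\dd r$, $S$ and $B$, all of which are free of $(\om,s)$; each such interchange is then a standard differentiation under a stochastic integral, dominated via \eqref{eq:leibniz} and \eqref{eq:bound}. A secondary point to confirm is that the stochastic integrals $\int N^{(j,k)}\,\dd Q^{(n-j,m-k)}$ are true, not merely local, martingales, so that the replicating strategy is admissible; this again follows from the exponential integrability furnished by \eqref{eq:bound}, as in Corollary~\ref{cor:pow-exp}.
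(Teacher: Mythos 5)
Your proposal is correct and follows essentially the same route as the paper's own proof: reduce $\Pi_t^{(n,m)}$ to $(-\ii\d_\om)^n(-\ii\d_s)^m N_t Q_t$ via the Leibniz rule, establish martingality of $\Pi^{(n,m)}$ and the $Q^{(j,k)}$ by iterated conditioning together with the domination bound \eqref{eq:leibniz}, apply It\^o's product rule termwise over the double sum, identify the $\dd S_t$ coefficient by resumming, and kill the residual finite-variation process via the continuous-finite-variation-martingale argument. The only cosmetic difference is that you write out the $W$-representations of $N$ and $Q$ and compute $[W,Q]$ explicitly to isolate the $\dd S_t$ term, whereas the paper simply absorbs all finite-variation contributions---including the covariation terms---into a single process $A^{(n,m)}$ and never computes $[W,Q]$ by name.
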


\begin{proof}
Throughout this proof, all uses of the Leibniz rule are justified by \eqref{eq:leibniz}.
From \eqref{eq:Pi.2}, at any time $t \in [0,T]$ we have
\begin{align}
\Pi_t^{(n,m)} 
	&=	\sum_{j=0}^{n} \sum_{k=0}^{m} \binom{n}{j} \binom{m}{k} N_t^{(j,k)}  Q_t^{(n-j,m-k)}
	= 	(-\ii \d_\om)^n (-\ii \d_s)^m N_t Q_t , \label{eq:step1}
\end{align}
using $B_t = 1$, equation \eqref{eq:N.nm} and equation \eqref{eq:Q.nm}.  In particular, at the maturity date $T$, we have 
\begin{align}
\Pi_T^{(n,m)} 
	&=	  (-\ii \d_\om)^n (-\ii \d_s)^m N_T Q_T 
	=		  (-\ii \d_\om)^n (-\ii \d_s)^m \ee^{\ii ( \om - u) X_T + \ii s \<X\>_T} \Eb_T \ee^{\ii u X_T} \\
	&=		(-\ii \d_\om)^n (-\ii \d_s)^m \ee^{\ii \om X_T + \ii s \<X\>_T } 
	=			X_T^n \<X\>_T^m \ee^{\ii \om X_T + \ii s \<X\>_T} ,
\end{align}
which establishes \eqref{eq:Pi.T.2}.
To prove that $\Pi^{(n,m)}$ satisfies the self-financing condition \eqref{eq:self.financing}, observe that
\begin{align}
\Eb_t  X_T^n \<X\>_T^m  \ee^{\ii \om X_T + \ii s \<X\>_T}
	&=	(-\ii \d_\om)^n (-\ii \d_s)^m \ee^{\ii(\om - u)X_t + \ii s \<X\>_t} \Eb_t \ee^{\ii u X_T} \\
	&=		(-\ii \d_\om)^n (-\ii \d_s)^m  N_t Q_t
	=		\Pi_t^{(n,m)}  . \label{eq:martingale.2}
\end{align}
The left-hand side of \eqref{eq:martingale.2} is a martingale by iterated conditioning, so $\Pi^{(n,m)}$ must also be a martingale.  For any $j,k \in \{0\} \cup \Nb$ the process $Q^{(j,k)}$ is a martingale by the same reasoning.  Next, 
by \eqref{eq:step1}, 
\begin{align}
\dd \Pi_t^{(n,m)}
	&=	\sum_{j=0}^n \sum_{k=0}^m \binom{n}{j} \binom{m}{k} 
			\Big( N_t^{(k,j)} \dd Q_t^{(n-j,m-k)} + Q_{t-}^{(n-j,m-k)} \dd N_t^{(j,k)} + \dd[N_t^{(j,k)},Q_t^{(n-j,m-k)}]_t \Big) \\
	&=	\sum_{j=0}^n \sum_{k=0}^m \binom{n}{j} \binom{m}{k}  N_t^{(j,k)} \dd Q_t^{(n-j,m-k)} 
			+ \Big( (-\ii \d_\om)^n (-\ii\d_s)^m  \frac{ \ii (\om-u) N_t Q_{t-}}{S_t} \Big) \dd S_t + \dd A_t^{(n,m)} ,
\end{align}
where $A^{(n,m)}=(A_t^{(n,m)})_{0 \leq t \leq T}$ has finite variation.  As $\Pi^{(n,m)}$, $S$, and $Q^{(j,k)}$ for any $j,k$ are martingales, it follows that $A^{(n,m)}$ is a local martingale.  Moreover, as sample paths of $S$ are continuous, so too are the sample paths of $N^{(j,k)}$ for any $j,k$ and, hence, the sample paths of $A^{(n,m)}$.  As a finite variation, continuous local martingale, $A^{(n,m)}$ must be constant.  Thus $\dd A_t^{(n,m)} = 0$ and 
\begin{align}
\dd \Pi_t^{(n,m)}
	&=	 \sum_{j=0}^{n} \sum_{k=0}^{m} \binom{n}{j} \binom{m}{k} N_t^{(j,k)}  \dd Q_t^{(n-j,m-k)}  
				+ \Big( (-\ii \d_\om)^n (-\ii\d_s)^m  \frac{ \ii (\om-u) N_t Q_{t-}}{S_t} \Big) \dd S_t \\
	&=	\sum_{j=0}^{n} \sum_{k=0}^{m} \binom{n}{j} \binom{m}{k} N_t^{(j,k)}  \dd Q_t^{(n-j,m-k)}  
				+ \Big( (-\ii \d_\om)^n (-\ii\d_s)^m  \frac{ \ii (\om-u) N_t Q_{t-}}{S_t} \Big) \dd S_t \\ & \quad
				+ \Big( - (-\ii \d_\om)^n (-\ii\d_s)^m \ii (\om-u) N_t Q_{t-} \Big) \dd B_t , \label{eq:d.Pi.2}
\end{align}
using $\dd B_t = 0$.  Comparing \eqref{eq:Pi.2} and \eqref{eq:d.Pi.2} establishes the self-financing condition.  
\end{proof}


\begin{example}[Sanity check: hedging a variance swap]
To replicate the floating leg of a variance swap which pays $\<X\>_T$, 
 take $(n,m)=(0,1)$ in \eqref{eq:Pi.2}, which yields
\begin{align}
\Pi_t^{(0,1)}
	&=	N_t^{(0,1)} Q_t + N_t Q_t^{(0,1)} \\ &\qquad
			+ \frac{1}{S_t} \Big( (\d_s u)N_t Q_{t-} + (\om-u) (\d_s N_t) Q_{t-}  + (\om-u) N_t (\d_s Q_{t-}) \Big) S_t \\ &\qquad
			- \Big( (\d_s u)N_t Q_{t-} + (\om-u) (\d_s N_t) Q_{t-}  + (\om-u) N_t (\d_s Q_{t-}) \Big) B_t .
\end{align}
In particular, for $u = u_+$ and $(\om,s)=(0,0)$, we have
\begin{align}
u(0,0)
	&=	0 , &
\d_s u(0,0)
	&=	-2 , &
-\ii \d_s N_t(0,0)
	&=	2 X_t + \<X\>_t , &
-\ii \d_s Q_t(0,0)
	&=	-2 \Eb_t X_T ,
\end{align}
and
\begin{align}
\Pi_t^{(0,1)}(0,0)
	&=	\Big( 2 X_t + \<X\>_t \Big) + \Eb_t (-2X_T) + \frac{2}{S_t} S_t - 2 B_t \\
	&=	- 2 \Eb_t (X_T-X_0) + \frac{2}{S_t} S_t + \Big( -2 + \<X\>_t + 2 X_t - 2 X_0 \Big) B_t \\
	&=	- 2 \Eb_t \log \Big( \frac{S_T}{S_0} \Big) + \frac{2}{S_t} S_t + \Big( -2 + \int_0^t \frac{2}{S_r}\dd S_r \Big) B_t, 
\end{align}
which recovers the classical hedging strategy for a variance swap: hold $-2$ European $\log$ contracts, keep two units of currency in $S$ at all times $t \in [0,T]$ and finance the position with zero-coupon bonds.
\end{example}


\subsection{Pricing and replicating more general payoffs}
As previously mentioned, \cite{rrvd} use complex exponential claims as building blocks to construct prices and replication strategies for a variety of other more complicated claims, including claims that pay $\<X\>_T^r$ where $-\infty<r<1$ (see \cite[Propositions 7.1 and 7.2]{rrvd}).  For options on $\<X\>_T$ only, this is typically done via Laplace transforms.  For options on $(X_T,\<X\>_T)$ it will be helpful to introduce the \emph{generalized Fourier transform} $\Fv$ and \emph{inverse transform} $\Fv^{-1}$.   
For any functions $f : \Rb \to \Cb$ and $\fh : \Cb \to \Cb$ such that the following integrals exist, 
define
\begin{align}
\text{Fourier Transform}:&&
&\Fv[f](\om)
	:=	\frac{1}{2\pi}\int_{\Rb}  f(x) \ee^{- \ii \om x} \dd x, & 
\om
	&\in \Cb , \label{eq:FT}\\
\text{Inverse Transform}:&&
&\Fv^{-1}[\fh](x)
	:=	\int_{\Rb}  \fh(\om) \ee^{ \ii \om x} \dd \om_r, &
\om_r
	&=	\Re(\om) , \label{eq:iFT}
\end{align}
Consider now a European-style claim with a payoff of the form 
\begin{align}
\varphi(X_T,\<X\>_T) 
	&= f(X_T)\<X\>_T^m \ee^{ \ii s \<X\>_T} , &
f
	&:	\Rb \to \Cb , &
m
	&\in \{0\} \cup \Nb , &
s
	&\in \Cb . \label{eq:euro-style}
\end{align}
If $f = \Fv^{-1}[ \fh ]$ where $\fh = \Fv[ f ]$, then formally, we have
\begin{align}
\Eb_{\tau^*} \varphi(X_T,\<X\>_T) 
	&=	\Eb_{\tau^*} f(X_T)\<X\>_T^m \ee^{\ii s \<X\>_T} \\
	&=	\Eb_{\tau^*} f(X_T) (-\ii \d_s)^m  \ee^{ \ii s \<X\>_T} \\
	&=	\int_\Rb \fh(\om) (-\ii \d_s)^m  \Eb_{\tau^*} \ee^{\ii \om X_T + \ii s \<X\>_T}\dd \om_r  &
	&		\text{(as $f = \Fv^{-1}[\fh])$} \\
	&=	\int_\Rb  \fh(\om) (-\ii \d_s)^m  \ee^{\ii (\om-u(\om,s)) X_{\tau^*} + \ii s \<X\>_{\tau^*}}\Eb_{\tau^*} \ee^{\ii u(\om,s) X_T }\dd \om_r , &
	&		\text{(by Theorem \ref{thm:E1=E2})} \\
	&=	\Eb_{\tau^*} g(X_T,X_{\tau^*},\<X\>_{\tau^*}) , \label{eq:pricing} \\
g(X_T,X_{\tau^*},\<X\>_{\tau^*})
	&:=	\int_\Rb  \fh(\om) (-\ii \d_s)^m  \ee^{\ii (\om-u(\om,s)) X_{\tau^*} + \ii s \<X\>_{\tau^*}} \ee^{\ii u(\om,s) X_T }\dd \om_r. \label{eq:euro.g}
\end{align}
Assuming the various applications of Fubini and the Leibniz integral rule are justified, equation \eqref{eq:pricing} relates the value of a European-style claim with a payoff of the form \eqref{eq:euro-style} to the value of a European claim with payoff \eqref{eq:euro.g}.  Moreover, as
\begin{align}
\varphi(X_T,\<X\>_T)
	&=	\int_\Rb \fh(\om) \<X\>_T^m \ee^{\ii \om X_T + \ii s \<X\>_T} \dd \om_r ,
\end{align}
a replicating strategy for $\varphi(X_T,\<X\>_T)$ can be obtained by taking a (continuous) linear combination of replicating strategies for power-exponential claims with payoffs of the form $\<X\>_T^m \ee^{\ii \om X_T + \ii s \<X\>_T}$.

\section{Knock-out claims}
\label{sec:ko}
For any $H \in \Rb$ define the \emph{first hitting time to level $H$} as
\begin{align}
\tau_H
	&:=	\inf \{ t \geq 0 : X_t = H \} , &
H
	&\in \Rb , \label{eq:tau.L}
\end{align}
where $\inf\emptyset  := \infty$.  Next, for any $L,U \in \Rb$ with $L < X_0 < U$, define the \emph{first hitting time to level $L$ or $U$} as
\begin{align}
\tau_{L,U}
	&:=	\tau_L \wedge \tau_U , &
L
	&<  X_0 < U .  \label{eq:tau.LU}
\end{align}
Observe that $\tau_H$ and $\tau_{L,U}$ are $\Fb$-stopping times as are their $T$-bounded counterparts $\tau_H^*$ and $\tau_{L,U}^*$.


\subsection{Single barrier knock-out claims}
\label{sec:sbko}
This section considers \emph{single barrier knock-out} claims with payoffs of the form
\begin{align}
\text{Single barrier knock-out}:&&
	 \Ib{\tau_H > T}\varphi(X_T,\<X\>_T) . \label{eq:sbko.H}
\end{align}
The following strategy replicates a single barrier knock-out claim with a down-barrier $L < X_0$.

\begin{theorem}[Replication of single barrier knock-out claims]
\label{thm:sbko}
Fix $L < X_0$.  The following trading strategy replicates a single barrier knock-out claim with payoff 
\begin{align}
	 \Ib{\tau_L > T}\varphi(X_T,\<X\>_T) . \label{eq:sbko}
\end{align}
At time $0$ hold a European-style claim with payoff
\begin{align}
\varphi_L^\ko(X_T,\<X\>_T)
	&:=	 \Ib{X_T > L} \varphi(X_T,\<X\>_T) - \Ib{X_T < L} \ee^{X_T - L} \varphi(2 L - X_T,\<X\>_T) . \label{eq:phi.sbko}
\end{align}
If and when the claim knocks out, close the position in $\varphi_L^\ko(X_T,\<X\>_T)$ at no cost.
\end{theorem}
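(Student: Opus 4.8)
The plan is to reduce the theorem to two self‑contained facts about the European‑style payoff $\varphi_L^\ko$ in \eqref{eq:phi.sbko}, both statements about the random variable $\varphi_L^\ko(X_T,\<X\>_T)$: (a) on the no–knock‑out event $\{\tau_L > T\}$ one has $\varphi_L^\ko(X_T,\<X\>_T) = \varphi(X_T,\<X\>_T)$; and (b) on the knock‑out event $\{\tau_L \le T\}$ the time‑$\tau_L$ value of the European‑style claim vanishes, i.e. $\Eb_{\tau_L^*}\varphi_L^\ko(X_T,\<X\>_T) = 0$ a.s. on $\{\tau_L \le T\}$. Granting (a) and (b), the strategy is immediate: buying the European‑style claim at time $0$ costs exactly its price; if the barrier is never reached, (a) gives a terminal payoff $\varphi(X_T,\<X\>_T) = \Ib{\tau_L > T}\varphi(X_T,\<X\>_T)$ on that event; if the barrier is reached at $\tau_L \le T$, (b) says the claim is worth zero there, so liquidating it costs nothing and the resulting (zero) wealth matches $\Ib{\tau_L > T}\varphi(X_T,\<X\>_T) = 0$ on that event. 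Replicability of the European‑style claim $\varphi_L^\ko$ itself—which is what makes the word ``replicates'' literal—is inherited from the results of Section \ref{sec:euro}, applied with $\varphi_L^\ko$ in place of $\varphi$, since $\varphi_L^\ko(X_T,\<X\>_T)$ is a function of $(X_T,\<X\>_T)$.

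Fact (a) is a consequence of path continuity. On $\{\tau_L > T\}$ we have $X_t \ne L$ for every $t \in [0,T]$; since $X_0 > L$ and $t \mapsto X_t$ is continuous, the intermediate value theorem forces $X_t > L$ for all $t \in [0,T]$, and in particular $X_T > L$. Hence $\Ib{X_T > L} = 1$ and $\Ib{X_T < L} = 0$, so the second term of \eqref{eq:phi.sbko} drops out and $\varphi_L^\ko(X_T,\<X\>_T) = \varphi(X_T,\<X\>_T)$.

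Fact (b) is the heart of the argument and is where Put–Call Symmetry enters. On $\{\tau_L \le T\}$ we have $\tau_L^* = \tau_L$ and $X_{\tau_L} = L$ by continuity. As in the proof of Theorem \ref{thm:E1=E2}, I condition first on $\Fc_{\tau_L} \vee \Fc_T^\sig$: the increment $w := \<X\>_T - \<X\>_{\tau_L}$ is $\Fc_{\tau_L}\vee\Fc_T^\sig$‑measurable and, by independence of $\sig$ and $W$,
\begin{align}
X_T - L \mid \Fc_{\tau_L} \vee \Fc_T^\sig
	&\sim \Nc\big( -\tfrac{1}{2} w , \, w \big) .
\end{align}
Writing both terms of $\varphi_L^\ko(X_T,\<X\>_T)$ as Gaussian integrals in $y := X_T - L$ against the density $p(y) = (2\pi w)^{-1/2}\ee^{-(y + w/2)^2/(2w)}$ (with $\<X\>_T = \<X\>_{\tau_L} + w$ held fixed), and substituting $y \mapsto -y$ in the integral coming from the second term, the two conditional expectations coincide precisely when $p(y) = \ee^{-y} p(-y)$; a one‑line algebraic check shows this identity holds exactly because the Gaussian mean is minus one half of its variance—the ``symmetric smile'' feature guaranteed here by $\sig \ind W$. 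Consequently $\Eb\big[\varphi_L^\ko(X_T,\<X\>_T) \mid \Fc_{\tau_L}\vee\Fc_T^\sig\big] = 0$ a.s. on $\{\tau_L \le T\}$, and a further (iterated) conditioning down to $\Fc_{\tau_L}$ yields (b). Equivalently, one may read $\ee^{X_T - L} = S_T / S_{\tau_L}$ as the density of the share‑denominated pricing measure and phrase (b) as a reflection identity for $X$ under that measure.

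The main obstacle is making (b) rigorous: the measure‑theoretic care in conditioning on $\Fc_{\tau_L}\vee\Fc_T^\sig$ at the random time $\tau_L$, the use of $\sig \ind W$ to identify the conditional law as Gaussian with the symmetric drift $-\tfrac12 w$, and the integrability of $\varphi(X_T,\<X\>_T)$ and of $\ee^{X_T}\varphi(2L - X_T,\<X\>_T)$ needed for the conditional expectations to be well defined and for the Gaussian‑integral manipulations (Fubini, change of variables) to be valid—a mild growth bound on $\varphi$ together with \eqref{eq:bound}, in the spirit of \eqref{eq:leibniz}, suffices. Fact (a) and the assembly of (a)–(b) into the stated self‑financing strategy are then routine.
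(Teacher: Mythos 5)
Your proof is correct and follows the same overall structure as the paper's: (a) on $\{\tau_L>T\}$, continuity of $X$ and $X_0>L$ give $X_T>L$, so the two payoffs agree; (b) on $\{\tau_L\le T\}$, condition on $\Fc_{\tau_L}\vee\Fc_T^\sig$ and use a reflection identity to show $\Ib{\tau_L\le T}\Eb_{\tau_L^*}\varphi_L^\ko(X_T,\<X\>_T)=0$, then iterate conditioning. Where you diverge is in how the reflection step is established. The paper invokes geometric Put--Call Symmetry as a black box from \cite{pcs} (Definition~2.6 and Theorem~5.3), first for functions of $X_T$ alone, then extends to functions of $(X_T,\<X\>_T)$ by the same conditioning on $\Fc_{\tau^*}\vee\Fc_T^\sig$. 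You instead derive the needed identity from scratch: conditional on $\Fc_{\tau_L}\vee\Fc_T^\sig$, $X_T-L\sim\Nc(-w/2,w)$ with $w=\<X\>_T-\<X\>_{\tau_L}$, and the density satisfies $p(y)=\ee^{-y}p(-y)$, so the change of variables $y\mapsto-y$ turns the second integral into the first. Your algebra checks out (both sides equal $(2\pi w)^{-1/2}\ee^{-y/2-y^2/(2w)-w/8}$). The payoff of your route is a self-contained, elementary argument that makes visible exactly where the independence $\sig\ind W$ and the drift $-\tfrac12\sig_t^2$ enter; the cost is that it hard-codes the Gaussian structure, whereas the paper's citation to \cite{pcs} would transfer unchanged to any model already known to satisfy geometric PCS. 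You also correctly flag the integrability hypotheses on $\varphi$ needed for the conditional expectations and Fubini/change-of-variables steps to be valid, which the paper leaves implicit.
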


\begin{proof}
If $\tau_L > T$, then $X_T > L$ and thus, both the knock-out claim \eqref{eq:sbko} and the European-style claim \eqref{eq:phi.sbko} pay $\varphi(X_T,\<X\>_T)$.  
It remains to show that, when $\tau_L \leq T$, the European-style claim \eqref{eq:phi.sbko} has zero value at time $\tau_L$.
First, we note from \cite[Definition 2.6]{pcs} that $S=\ee^X$ satisfies \emph{geometric put-call symmetry}.  Thus, \cite[Theorem 5.3]{pcs} implies that
\begin{align}
\Eb_{\tau^*} G(X_T)
	&=	\Eb_{\tau^*} \ee^{X_T - X_{\tau^*}} G(2 X_{\tau^*} - X_T) . \label{eq:pcs}
\end{align} 
for any $\Fb$-stopping time $\tau$ and $G : \Rb \to \Cb$.  Therefore
\begin{align}
\Eb_{\tau^*} G(X_T,\<X\>_T)
	&=	\Eb_{\tau^*} \Eb[ G(X_T,\<X\>_T) | \Fc_{\tau^*} \vee \Fc_T^\sig ] \\
	&=	\Eb_{\tau^*} \Eb[ \ee^{X_T - X_{\tau^*}} G(2 X_{\tau^*} - X_T,\<X\>_T) | \Fc_{\tau^*} \vee \Fc_T^\sig ] \\
	&=	\Eb_{\tau^*} \ee^{X_T - X_{\tau^*}} G(2 X_{\tau^*} - X_T,\<X\>_T) , \label{eq:pcs2}
\end{align}
where the second equality follows from \eqref{eq:pcs} and the fact that $S=\ee^X$, conditioned on the path of $\sig$, satisfies geometric put-call symmetry.  Using \eqref{eq:pcs2} with $G(x,v) = \Ib{x>L} \varphi(x,v)$ and recalling that $\Ib{\tau_L \leq T} (X_{\tau_L^*} - L )=0$, we have $\Ib{\tau_L \leq T}\Eb_{\tau_L^*} \varphi_L^\ko(X_T,\<X\>_T) = 0$.
\end{proof}

\begin{remark}
For the single barrier knock-out claim $\Ib{\tau_U > T}\varphi(X_T,\<X\>_T)$ with \emph{up}-barrier $U > X_0$, the replication strategy is to hold at time 0 a European-style claim with payoff
\begin{align}
\varphi_U^\ko(X_T,\<X\>_T)
	&:=	 \Ib{X_T < U} \varphi(X_T,\<X\>_T) - \Ib{X_T > U} \ee^{X_T - U}\varphi(2U - X_T,\<X\>_T) , 
\end{align}
and clear the position at no cost if and when the barrier $U$ is hit.
\end{remark}

\begin{proposition}[Price of a single barrier knock-out power-exponential claim]
\label{prp:sbko-pow-exp}
Assume the distribution of $X_T$ has no point masses (a sufficient condition is that $\int_0^T \sig_t^2 \dd t > \eps > 0$).  
Then for any $L < X_0$, $j,k \in \{0\} \cup \Nb$ and $p,s \in \Cb$ we have
\begin{align}
\Eb \Ib{\tau_L > T} X_T^j \< X \>_T^k \ee^{\ii p X_T + \ii s \< X \>_T }
	&=	\lim_{n \to \infty} \Eb \Big( g_n(X_T) - h_n(X_T) \Big) , \label{eq:sbko-main}
\end{align}
where the functions $g_n$ and $h_n$ are given by
\begin{align}
g_n(X_T)
	&=	\int_\Rb  (- \ii \d_p)^j (- \ii \d_s)^k \Hh_n(\om - p) \ee^{- \ii (\om - p) L + \ii(\om - u(\om,s))X_0} \ee^{\ii u(\om,s) X_T} \dd \om_r, \\
h_n(X_T) 
	&=	\int_\Rb  (- \ii \d_p)^j (- \ii \d_s)^k \Hh_n(-\ii - \om - p) \ee^{- \ii (\om - p) L + \ii(\om - u(\om,s))X_0} \ee^{\ii u(\om,s) X_T}\dd \om_r, \\
\Hh_n(\om)
	&=	\frac{-\ii}{4n} \textup{csch} \Big( \frac{\pi \om}{2 n} \Big) . \label{eq:Hhat}
\end{align}
Here, the contour of integration in $g_n$ is chosen so that $-2n + p_i < \om_i < p_i$ and $2 \ii s-\omega ^2-\ii \om +\tfrac{1}{4} \neq 0$, and
the contour of integration in $h_n$ is chosen so that $-1 - p_i < \om_i < 2n - 1 - p_i$ and $2 \ii s-\omega ^2-\ii \om +\tfrac{1}{4} \neq 0$.
\end{proposition}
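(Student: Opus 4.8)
The plan is to reduce the knock-out price to the price of a European-style claim via Theorem~\ref{thm:sbko}, and then to evaluate the two resulting expectations by representing the indicators $\Ib{X_T>L}$ and $\Ib{X_T<L}$ through the kernel $\Hh_n$. First I would take $\varphi(x,v)=x^jv^k\ee^{\ii p x+\ii s v}$ in Theorem~\ref{thm:sbko}. From the identity $\Ib{\tau_L\le T}\Eb_{\tau_L^*}\varphi_L^\ko(X_T,\<X\>_T)=0$ established in its proof, together with the tower property and the fact that $X_T>L$ on $\{\tau_L>T\}$ (so that $\varphi_L^\ko(X_T,\<X\>_T)=\varphi(X_T,\<X\>_T)$ there), I obtain $\Eb\,\Ib{\tau_L>T}X_T^j\<X\>_T^k\ee^{\ii p X_T+\ii s\<X\>_T}=\Eb\,\varphi_L^\ko(X_T,\<X\>_T)=\mathrm{I}-\mathrm{II}$, where $\mathrm{I}:=\Eb\,\Ib{X_T>L}X_T^j\<X\>_T^k\ee^{\ii p X_T+\ii s\<X\>_T}$ and $\mathrm{II}:=\Eb\,\Ib{X_T<L}\ee^{X_T-L}(2L-X_T)^j\<X\>_T^k\ee^{\ii p(2L-X_T)+\ii s\<X\>_T}$.

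The analytic engine is an explicit evaluation of the inverse transform of $\Hh_n$: on any horizontal contour $\{\om_i=-a\}$ with $0<a<2n$ one has $\int_\Rb\Hh_n(\om)\ee^{\ii\om y}\,\dd\om_r=(1+\ee^{-2ny})^{-1}$ for $y\ne0$. I would prove this by the residue theorem, using that $\Hh_n$ has simple poles exactly at $\om\in 2\ii n\Zb$ with $\mathrm{Res}_{\om=2\ii nm}\Hh_n=(-1)^m(-\ii)/(2\pi)$, and closing the contour in the upper (resp.\ lower) half-plane when $y>0$ (resp.\ $y<0$) through circular arcs of radius $2n(M+\tfrac12)$, whose contribution vanishes because $\csch$ decays exponentially off the imaginary axis; the enclosed residues sum a geometric series to the stated logistic function. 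This function is bounded by $1$ and tends to $\Ib{y>0}$ as $n\to\infty$ for every $y\ne0$, which, with the no-atom hypothesis on the law of $X_T$, is what the whole argument rests on.

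For $\mathrm{I}$, dominated convergence (with dominating function $|X_T|^j|\<X\>_T|^k\ee^{|\Im p|\,|X_T|+|\Im s|\,|\<X\>_T|}$, integrable by \eqref{eq:bound}) gives $\mathrm{I}=\lim_n\Eb\,(1+\ee^{-2n(X_T-L)})^{-1}X_T^j\<X\>_T^k\ee^{\ii p X_T+\ii s\<X\>_T}$. For each fixed $n$ I would then replace the logistic factor by $\int\Hh_n(\om)\ee^{\ii\om(X_T-L)}\dd\om_r$ on a contour in $-2n<\om_i<0$, use Fubini to swap $\Eb$ and $\int\dd\om_r$, apply Theorem~\ref{thm:E1=E2} (with $\<X\>_0=0$) to $\Eb_0\ee^{\ii(\om+p)X_T+\ii s\<X\>_T}$, change variables $\om\mapsto\om+p$ to shift the contour into $-2n+p_i<\om_i<p_i$ (after a mild deformation if necessary, to avoid the poles of $\Hh_n$ and the branch point of $u$, i.e.\ to meet $2\ii s-\om^2-\ii\om+\tfrac14\ne0$), and finally move $(-\ii\d_p)^j(-\ii\d_s)^k$ through $\Eb$ and $\int\dd\om_r$ via the Leibniz bound \eqref{eq:leibniz} --- an integration by parts in $\om_r$ transferring the $\d_\om$'s coming from $X_T^j$ onto the $\Hh_n$-factor, where they become $-\d_p$'s. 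This gives $\mathrm{I}=\lim_n\Eb\,g_n(X_T)$. The $\mathrm{II}$ term is handled the same way after writing $\ee^{X_T-L}\varphi(2L-X_T,\<X\>_T)=(-\ii\d_p)^j(-\ii\d_s)^k[\ee^{-L+2\ii pL}\ee^{\ii(-\ii-p)X_T+\ii s\<X\>_T}]$ and approximating $\Ib{X_T<L}$ by $(1+\ee^{-2n(L-X_T)})^{-1}$; the one new point is the reflection identity $u_\pm(-\om-\ii,s)=u_\pm(\om,s)$, valid because the discriminant $\tfrac14-\om^2-\ii\om+2\ii s$ is unchanged under $\om\mapsto-\om-\ii$. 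Applying Theorem~\ref{thm:E1=E2} with frequency $-\ii-p-\om$ and then substituting $\om\mapsto-\ii-p-\om$ brings the integral to the form of $h_n$ (contour $-1-p_i<\om_i<2n-1-p_i$), the prefactor $\ee^{-L+2\ii pL}$ cancelling against the $\ee^{L-2\ii pL}$ produced by the substitution. Subtracting yields \eqref{eq:sbko-main}.

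The step I expect to be most delicate is the Fubini interchange, with the uniform-in-$\om_r$ integrable bounds it rests on: one must control simultaneously $|\Hh_n(\om\mp p)|$ and its $p$-derivatives (exponential decay in $\om_r$, at an $n$-dependent rate --- precisely the reason $\csch$ is used in place of the bare Heaviside symbol $1/(2\pi\ii\om)$, whose polynomial decay is too slow for Fubini), the growth of $u(\om,s)$ and its $s$-derivatives along the contour (which is what forces $2\ii s-\om^2-\ii\om+\tfrac14\ne0$, since $\d_s u\propto(\tfrac14-\om^2-\ii\om+2\ii s)^{-1/2}$), and the exponential moments of $(X_T,\<X\>_T)$ furnished by \eqref{eq:bound} --- noting that $|\ee^{\ii u X_T}|=\ee^{-\Im(u)X_T}$ with $\Im u$ bounded on the contour. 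The remaining work, the contour changes of variables and the Leibniz-rule interchanges, is routine bookkeeping.
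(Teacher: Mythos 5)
Your proposal is correct and follows essentially the same route as the paper: reduce via Theorem~\ref{thm:sbko} to the European-style payoff $\varphi_L^\ko$, approximate the indicators by a smooth cutoff with Fourier transform $\Hh_n$, then interchange expectation and $\om_r$-integration (Fubini) and apply Theorem~\ref{thm:E1=E2} to land on $\Eb g_n - \Eb h_n$. The one place you add detail is the residue computation identifying $\Fv^{-1}[\Hh_n]$ with the logistic $(1+\ee^{-2ny})^{-1}$; this is exactly the paper's $H_n(y)=\tfrac12(1+\tanh ny)$, so the two arguments coincide.
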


\begin{proof}
Let $H$ denote the Heaviside function and let $H_n$ ($n \in \Nb$) denote a smooth approximation of $H$.  Specifically, let
\begin{align}
H(x)
	&:=	\tfrac{1}{2}(1 + \text{sgn} \, x) , &
H_n(x) 
	&:= \tfrac{1}{2} ( 1 + \tanh n x ) . \label{eq:heaviside}
\end{align}
Observe that $H_n \to H$ pointwise as $n \to \infty$.  Now, as the price of any claim is equal to the price of its replicating portfolio, we have by Theorem \ref{thm:sbko} that
\begin{align}
&\Eb \Ib{\tau_L > T} X_T^j \<X\>_T^k \ee^{\ii p X_T + \ii s \<X\>_T} \\
	&=	\Eb \Big( \Ib{X_T > L} X_T^j \<X\>_T^k   \ee^{\ii p X_T + \ii s \<X\>_T} 
			- \Ib{X_T < L} (2L-X_T)^j \<X\>_T^k \ee^{X_T - L + \ii p (2 L - X_T) + \ii s \<X\>_T} \Big) \\
	&=	\Eb  \lim_{n \to \infty} \Big( H_n(X_T-L) X_T^j \<X\>_T^k  \ee^{\ii p X_T + \ii s \<X\>_T} 
			- H_n(L-X_T) (2L-X_T)^j \<X\>_T^k \ee^{X_T - L + \ii p (2 L - X_T) + \ii s \<X\>_T} \Big) \\
	&=	\lim_{n \to \infty} \Eb \Big( H_n(X_T-L) X_T^j \<X\>_T^k  \ee^{\ii p X_T + \ii s \<X\>_T} 
			- H_n(L-X_T) (2L-X_T)^j \<X\>_T^k \ee^{X_T - L + \ii p (2 L - X_T) + \ii s \<X\>_T} \Big) \\
	&=	\lim_{n \to \infty} (-\ii \d_p)^j(-\ii \d_s)^k \Eb \Big( H_n(X_T-L) \ee^{\ii p X_T + \ii s \<X\>_T} 
			- H_n(L-X_T) \ee^{X_T - L + \ii p (2 L - X_T) + \ii s \<X\>_T} \Big) , \label{eq:sbko-1}
\end{align}
where the second equality holds by absence of point masses, the third equality holds by Lebesgue's dominated convergence theorem and the last equality follows from the Leibniz integral rule.  Noting that $\Fv[H_n] = \Hh_n$ where $\Hh_n(\om)$ is defined for $-2n < \om_i < 0$, it follows that
\begin{align} 
&(-\ii \d_p)^j(-\ii \d_s)^k  \Eb H_n(X_T-L) \ee^{\ii p X_T + \ii s \<X\>_T}  \\
	&=	(-\ii \d_p)^j(-\ii \d_s)^k  \Eb \int_\Rb  \Hh_n(\om - p) \ee^{- \ii (\om - p) L } \ee^{\ii \om X_T + \ii s \<X\>_T} \dd \om_r&
	&		(-2n + p_i < \om_i < p_i) \\
	&=	(-\ii \d_p)^j(-\ii \d_s)^k  \int_\Rb \Hh_n(\om - p) \ee^{- \ii (\om - p) L } \Eb \ee^{\ii \om X_T + \ii s \<X\>_T} \dd \om_r &
	&		\text{(by Fubini)}\\
	&=	(-\ii \d_p)^j(-\ii \d_s)^k  \int_\Rb \Hh_n(\om - p) \ee^{ \ii (\om - p) L + \ii (\om - u(\om,s))X_0} \Eb \ee^{\ii u(\om,s) X_T}  \dd \om_r &
	&		\text{(by \eqref{eq:E1=E2})} \\
	&=	(-\ii \d_p)^j(-\ii \d_s)^k  \Eb \int_\Rb \Hh_n(\om - p) \ee^{ \ii (\om - p) L + \ii (\om - u(\om,s))X_0} \ee^{\ii u(\om,s) X_T} \dd \om_r  &
	&		\text{(by Fubini)} \\
	&=	\Eb g_n(X_T) , &
	&		\text{(by Leibniz)} \label{eq:sbko-2}
\end{align}
where the applications of Fubini twice and the Leibniz rule are justified as
$|\d_p^j \Hh_n(\om-p)| = \Oc( \ee^{ - |\om_r|/n })$
and
\begin{align} 
\Eb |\d_p^j \d_s^k \ee^{- \ii (\om - p) L + \ii \om X_T + \ii s \<X\>_T}| 
	&= \Oc(1)  , &
\Eb |\d_p^j \d_s^k  \ee^{\ii (\om - p) L + \ii (\om - u(\om,s))X_0 +\ii u(\om,s) X_T}| 
	&= \Oc(1) ,
\end{align}
as $|\om_r| \to \infty$ and as the contour of integration is chosen to avoid any singularities in the integrand.  Similarly, 
\begin{align}
(-\ii \d_p)^j(-\ii \d_s)^k \Eb H_n(L-X_T) \ee^{X_T - L + \ii p (2 L - X_T) + \ii s \<X\>_T}
	&=	\Eb h_n(X_T) . \label{eq:sbko-3}
\end{align}
Equation \eqref{eq:sbko-main} follows from \eqref{eq:sbko-1}, \eqref{eq:sbko-2} and \eqref{eq:sbko-3}.
\end{proof}

\begin{remark}
\label{rmk:heaviside}
The reason we must replace the Heaviside function $H$ by $\lim_{n \to \infty} H_n$ in Proposition \ref{prp:sbko-pow-exp} is that the Heaviside function's Fourier transform $\Hh(\om) = -\ii / (2 \pi \om)$ $(\om_i > 0)$ does not decay fast enough as $|\om_r| \to \infty$ to justify the second use of Fubini in \eqref{eq:sbko-2}. 
\end{remark}

\begin{remark}
Equation \eqref{eq:sbko-main} is an equation of the form
\begin{align}
\Eb F[X]
	&=	\lim_{n \to \infty} \Eb g_n(X_T) , \label{eq:functional}
\end{align}
where $F$ is a \emph{functional} of $X=(X_t)_{0 \leq t \leq T}$.  Observe that $\Eb F[X]$ is the price of a path-dependent claim and $\Eb g_n(X_T)$ is the price of a European (i.e., path-independent) claim.  Thus, we say that, in the limit as $n \to \infty$, the expected payoff $\Eb g_n$ \emph{prices} the claim $F[X]$. 
\end{remark}

Figure \ref{fig:sbko} plots the function whose expectation, in the limit as $n \to \infty$, prices a single barrier knock-out variance swap, which pays $\Ib{\tau_L >T} \<X\>_T$.


\subsection{Double barrier knock-out claims}
\label{sec:dbko}
This section considers \emph{double barrier knock-out} claims with payoffs of the form
\begin{align}
\text{Double barrier knock-out claim}:&&
\Ib{\tau_{L,U}>T} \varphi(X_T,\<X\>_T) . \label{eq:dbko}
\end{align}
The following theorem gives a replication strategy for such claims.

\begin{theorem}[Replication of double barrier knock-out claims]
\label{thm:dbko}
Suppose $L < X_0 < U$.
Let $\varphi: (L,U) \times \Rb_+ \to \Cb$ be bounded.
The following trading strategy replicates a double barrier knock-out claim with payoff \eqref{eq:dbko}.
At time $0$ hold a European-style claim with payoff
\begin{align}
\varphi_{L,U}^\ko(X_T,\<X\>_T)
	&:=	\sum_{n=-\infty}^\infty \ee^{-n\Del} \Big(
			\varphi^*({2n\Del+X_T},\<X\>_T) 
			- \ee^{X_T-L}\varphi^*({2n\Del+2L-X_T},\<X\>_T)
			\Big) , \label{eq:phi.dbko} \\
\varphi^*(X_T,\<X\>_T)
	&:=	\varphi(X_T,\<X\>_T)\Ib{L<X_T<U} , \label{eq:phi.star}
\end{align}
where $\Del := U-L$.  If and when the claim knocks out, clear the position in $\varphi_{L,U}^\ko(X_T,\<X\>_T)$ at no cost.
\end{theorem}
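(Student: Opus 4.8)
The plan is to mimic the proof of Theorem~\ref{thm:sbko}: I would establish (i) that on $\{\tau_{L,U}>T\}$ the European-style claim \eqref{eq:phi.dbko} pays exactly $\varphi(X_T,\<X\>_T)$, and (ii) that on $\{\tau_{L,U}\le T\}$ the value of \eqref{eq:phi.dbko} at the knock-out time $\tau_{L,U}^*=\tau_{L,U}$ is zero, so the position can be cleared for nothing. Together (i) and (ii) show that holding \eqref{eq:phi.dbko} from time $0$ and liquidating it upon knock-out is a self-financing strategy with terminal value $\Ib{\tau_{L,U}>T}\varphi(X_T,\<X\>_T)$.

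For (i), I would first note that on $\{\tau_{L,U}>T\}$ continuity of $X$ and $X_0\in(L,U)$ force $X_T\in(L,U)$, so it is enough to show $\varphi^\ko_{L,U}(x,v)=\varphi(x,v)$ for $x\in(L,U)$. The point is that successive translates in \eqref{eq:phi.dbko} are spaced $2\Del>U-L$ apart, so the indicator $\Ib{L<\cdot<U}$ in \eqref{eq:phi.star} kills all but one term: for $x\in(L,U)$ one checks $2n\Del+x\notin(L,U)$ for every $n\ne0$, while $2n\Del+2L-x\in(L+(2n-1)\Del,\,L+2n\Del)$ meets $(L,U)$ for no integer $n$; only the $n=0$ term $\varphi^*(x,v)=\varphi(x,v)$ remains. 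I would record here that the same spacing shows that, pathwise, at most one term of each of the two sums in \eqref{eq:phi.dbko} is nonzero --- a fact I will reuse for (ii).

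For (ii), I would work on $\{\tau_{L,U}\le T\}$, where continuity gives $X_{\tau_{L,U}^*}\in\{L,U\}$, and treat $\{X_{\tau_{L,U}^*}=L\}$ and $\{X_{\tau_{L,U}^*}=U\}$ separately. The engine is the conditional reflection identity \eqref{eq:pcs2}, applied termwise with $\tau=\tau_{L,U}$ and $G(x,v)=\varphi^*(2n\Del+x,v)$. On $\{X_{\tau^*}=L\}$ it carries the $n$th term of the first sum in \eqref{eq:phi.dbko} onto the $n$th term of the second sum, so the two sums cancel. On $\{X_{\tau^*}=U\}$, using $2U=2L+2\Del$ reflection turns the $n$th term of the first sum into $\ee^{X_T-U}\varphi^*(2(n+1)\Del+2L-X_T,\<X\>_T)$; reindexing $n\mapsto n+1$ pulls out a factor $\ee^{\Del}$ that combines with $\ee^{X_T-U}=\ee^{-\Del}\ee^{X_T-L}$ to reproduce the second sum exactly, so again the sums cancel. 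Either way $\Eb_{\tau_{L,U}^*}\varphi^\ko_{L,U}(X_T,\<X\>_T)=0$.

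The step I expect to require the most care --- and the one that uses the hypothesis that $\varphi$ is bounded --- is justifying the interchange of the summation in \eqref{eq:phi.dbko} with $\Eb_{\tau_{L,U}^*}[\cdot]$, the termwise application of \eqref{eq:pcs2}, and the reindexing in the $U$-case. Here I would argue that, by the spacing observation, each partial sum of \eqref{eq:phi.dbko} is pathwise a sum of at most two nonzero terms; if $n$ is the surviving index of the first sum then $2n\Del>L-X_T$ yields $\ee^{-n\Del}<\ee^{(X_T-L)/2}$, and if $n$ is the surviving index of the second sum then $2n\Del>X_T-L$ yields $\ee^{-n\Del}\ee^{X_T-L}<\ee^{(X_T-L)/2}$; hence every partial sum is dominated by $2\|\varphi\|_\infty\ee^{(X_T-L)/2}=2\|\varphi\|_\infty\ee^{-L/2}S_T^{1/2}$. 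Since $S$ is a nonnegative $\Pb$-martingale, conditional Jensen gives $\Eb_{\tau_{L,U}^*}S_T^{1/2}\le S_{\tau_{L,U}^*}^{1/2}<\infty$, so the dominating random variable is conditionally integrable; dominated convergence then legitimizes the interchange and --- the series being absolutely convergent --- the reindexing, which completes (ii) and hence the proof.
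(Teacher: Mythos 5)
Your proof is correct, and it takes a genuinely different (more self-contained) route than the paper. The paper's proof is a two-line reduction: it fixes $v$ and invokes \cite[Theorem 5.18]{pcs} to conclude $\Ib{\tau_{L,U}\le T}\Eb_{\tau_{L,U}^*}\varphi_{L,U}^\ko(X_T,v)=0$, then passes from fixed $v$ to $\<X\>_T$ by conditioning on $\Fc_{\tau_{L,U}^*}\vee\Fc_T^\sig$. You instead re-derive the cancellation from scratch via the conditional reflection identity \eqref{eq:pcs2}: on $\{X_{\tau^*}=L\}$ it maps the first sum term-by-term onto the second, and on $\{X_{\tau^*}=U\}$ the same identity plus the reindexing $n\mapsto n+1$ (using $2U=2L+2\Del$ and $\ee^{X_T-U}=\ee^{-\Del}\ee^{X_T-L}$) reproduces the second sum. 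Both reindexing checks are correct.

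The two approaches buy different things. The paper's is shorter, leaning on a cited theorem that already packages the image-charge argument for double barriers; your version makes the telescoping explicit and, importantly, actually supplies the integrability estimate that underwrites the interchange of $\Eb_{\tau^*}[\cdot]$ with the infinite sum and the reindexing --- a point the paper itself only addresses later (in the proof of Proposition \ref{prp:dbko-pow-exp}) by reference to ``the arguments given in the proof of \cite[Theorem 5.18]{pcs}''. Your observation that the $2\Del$-spacing of translates combined with the $\Ib{L<\cdot<U}$ indicator forces at most one nonzero term per sum, together with the bound $\ee^{-n\Del}<\ee^{(X_T-L)/2}$ (resp.\ $\ee^{-n\Del}\ee^{X_T-L}<\ee^{(X_T-L)/2}$) for the surviving index, gives a clean pathwise dominant $2\|\varphi\|_\infty\ee^{-L/2}S_T^{1/2}$, and conditional Jensen's inequality on the nonnegative martingale $S$ closes the integrability argument. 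Part (i) --- verifying $\varphi_{L,U}^\ko\equiv\varphi$ on $(L,U)$ by the same spacing observation --- is also a worthwhile check that the paper states without justification. In short: same conclusion, more elementary and more explicit chain of reasoning.
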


\begin{proof}
If $\tau_{L,U} > T$, then $L < X_T < U$ and thus, both the knock-out claim \eqref{eq:dbko} and the European-style claim \eqref{eq:phi.dbko} pay $\varphi(X_T,\<X\>_T)$.   It remains to show that, if $\tau_{L,U} \leq T$, the European-style claim \eqref{eq:phi.dbko} has zero value at time $\tau_{L,U}$.  Recalling once again that $S = \ee^X$ satisfies geometric put-call symmetry, we have by \cite[Theorem 5.18]{pcs} that
\begin{align}
\Ib{\tau_{L,U}\leq T} \Eb_{\tau_{L,U}^*} \varphi_{L,U}^\ko (X_T,v)
	&=	0 ,
\end{align}
which holds for any fixed $v \in \Rb_+$.  Thus
\begin{align}
\Ib{\tau_{L,U}\leq T} \Eb_{\tau_{L,U}^*} \varphi_{L,U}^\ko (X_T,\<X\>_T)
	&=	\Ib{\tau_{L,U}\leq T} \Eb_{\tau_{L,U}^*} \Eb[ \varphi_{L,U}^\ko (X_T,\<X\>_T) | \Fc_{\tau_{L,U}^*} \vee \Fc_T^\sig]
	=	0 ,
\end{align}
because $\<X\>_T \in \Fc_T^\sig$ and the process $S=\ee^X$, conditioned on the path of $\sig$, satisfies geometric put-call symmetry.
\end{proof}

\begin{proposition}[Prices of double barrier knock-out power-exponential claims]
\label{prp:dbko-pow-exp}
Assume the distribution of $X_T$ has no point masses (a sufficient condition is that $\int_0^T \sig_t^2 \dd t > \eps > 0$).  
Then for any $L < X_0 < U$, $j,k \in \{0\} \cup \Nb$ and $p,s \in \Cb$ we have
\begin{align}
\Eb \Ib{\tau_{L,U}>T} X_T^j \<X\>_T^k \ee^{\ii p X_T + \ii s \<X\>_T}
	&=	\lim_{q \to \infty} \lim_{m \to \infty} \Eb \sum_{n=-q}^q \ee^{-n\Del} \Big( g_{n,m}(X_T) - h_{n,m}(X_T) \Big) , \label{eq:dbko-pow-exp}
\end{align}
where the functions $g_{n,m}$ and $h_{n,m}$ are given by
\begin{align}
g_{n,m}(X_T)
	&=	\int_\Rb \ee^{\ii \om 2 n \Del}  (-\ii \d_p)^j (-\ii \d_s)^k 
			\Big( \ee^{- \ii (\om - p) L}  - \ee^{- \ii (\om - p) U} \Big) \\ &\quad \quad
			\times \Hh_m(\om-p) \ee^{\ii (\om - u(\om,s))X_0+\ii u(\om,s) X_T} \dd \om_r, \\
h_{n,m}(X_T)
	&=	\int_\Rb \ee^{(1-\ii \om)( 2 n \Del- 2 L)+L} (-\ii \d_p)^j (-\ii \d_s)^k
			\Big( \ee^{- \ii (-\ii - p - \om) L}  - \ee^{- \ii ( -\ii - p -\om ) U} \Big) \\ &\quad \quad
			\times\Hh_m(-\ii - p - \om) \ee^{\ii (\om - u(\om,s))X_0+\ii u(\om,s) X_T} \dd \om_r,
\end{align}
with $\Hh_m$ as defined in \eqref{eq:Hhat}.  The contour of integration for $g_{n,m}$ must be chosen so that $-2m + p_i < \om_i < p_i$ and $2 \ii s-\omega ^2-\ii \om +\tfrac{1}{4} \neq 0$ and the contour of integration for $h_{n,m}$ must be chosen so that $-1 - p_i < \om_i < 2m - 1 - p_i$ and $2 \ii s-\omega ^2-\ii \om +\tfrac{1}{4} \neq 0$.
\end{proposition}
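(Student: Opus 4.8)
The plan is to follow the proof of Proposition~\ref{prp:sbko-pow-exp} step for step, with the single reflection replaced by the reflection group generated in Theorem~\ref{thm:dbko}. Since the price of a claim equals the price of its replicating portfolio, Theorem~\ref{thm:dbko} gives $\Eb\Ib{\tau_{L,U}>T}X_T^j\<X\>_T^k\ee^{\ii p X_T+\ii s\<X\>_T}=\Eb\varphi_{L,U}^\ko(X_T,\<X\>_T)$ for the choice $\varphi(x,v)=x^j v^k\ee^{\ii p x+\ii s v}$, so that $\varphi^*(x,v)=x^j v^k\ee^{\ii p x+\ii s v}\Ib{L<x<U}$. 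Even though this $\varphi$ is unbounded, the replication of Theorem~\ref{thm:dbko} still applies — the series in~\eqref{eq:phi.dbko} reducing on each path to at most one nonzero term in each of its two sums, since the admissible set of $n$ is an interval of length $\tfrac12$ — and every expectation below is finite because $x$ is bounded on the support $L<x<U$ and $\<X\>_T\le c$ by~\eqref{eq:bound}; the surviving term is bounded by a path-independent multiple of $\ee^{|X_T|/2}$, which is $\Pb$-integrable by~\eqref{eq:bound} as in~\eqref{eq:leibniz}. Writing $\Ib{L<x<U}=H(x-L)-H(x-U)$, replacing $H$ by the smooth $H_m$ of~\eqref{eq:heaviside}, and truncating $\sum_{n=-\infty}^{\infty}$ to $\sum_{n=-q}^{q}$, dominated convergence then reduces the claim to showing that the price equals $\lim_{q\to\infty}\lim_{m\to\infty}\Eb\sum_{n=-q}^{q}\ee^{-n\Del}(g_{n,m}(X_T)-h_{n,m}(X_T))$, where $\Eb g_{n,m}(X_T)$ and $\Eb h_{n,m}(X_T)$ must be identified with the expectations of the smoothed translated and reflected power-exponential payoffs; the $m$-limit is taken first, on a \emph{finite} sum, which is precisely where the hypothesis that $X_T$ has no point masses (so $H_m(X_T-L)\to\Ib{X_T>L}$ and $H_m(X_T-U)\to\Ib{X_T>U}$ $\Pb$-a.s.) enters.

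For fixed $m$, $n$ and $q$ I would rewrite each of these expectations in Fourier form exactly as in the passage leading to~\eqref{eq:sbko-2}: extract the power factors as $(-\ii\d_p)^j(-\ii\d_s)^k$ by the Leibniz rule, insert $\Fv[H_m]=\Hh_m$ with the $\csch$ kernel~\eqref{eq:Hhat} so that $\ee^{\ii p y}(H_m(y-L)-H_m(y-U))=\int_\Rb\dd\om_r\,\Hh_m(\om-p)(\ee^{-\ii(\om-p)L}-\ee^{-\ii(\om-p)U})\ee^{\ii\om y}$, exchange the $\om$-integral with $\Eb$ by Fubini, and apply Theorem~\ref{thm:E1=E2} with $\tau=0$ to replace $\Eb\ee^{\ii\om X_T+\ii s\<X\>_T}$ by $\ee^{\ii(\om-u(\om,s))X_0}\Eb\ee^{\ii u(\om,s)X_T}$. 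Taking $y=2n\Del+X_T$ produces the factor $\ee^{2\ii n\Del\om}$ and yields $g_{n,m}$. For the reflected term one takes $y=2n\Del+2L-X_T$ and absorbs the extra prefactor $\ee^{X_T-L}$ into the Fourier mode, writing $\ee^{-\ii\om X_T}\ee^{X_T}=\ee^{\ii(-\ii-\om)X_T}$; the algebraic fact that makes this go through cleanly is that the discriminant $\tfrac14-\om^2-\ii\om+2\ii s$ appearing in~\eqref{eq:u} is invariant under the involution $\om\mapsto-\ii-\om$, so $u(-\ii-\om,s)=u(\om,s)$, and carrying out this substitution in the $\om$-integral restores the form of Theorem~\ref{thm:E1=E2} and produces $h_{n,m}$, with the argument of $\Hh_m$ shifted to $-\ii-p-\om$, the same shift in the exponential differences, and the overall prefactor $\ee^{(1-\ii\om)(2n\Del-2L)+L}$.

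The contour prescriptions are then forced by analyticity. The kernel $\Hh_m$ is analytic only on the horizontal strip of width $2m$ immediately below the real axis, with simple poles wherever its argument lies in $2\ii m\Zb$ (from $\csch$), and $u(\om,s)$ ramifies precisely where $\tfrac14-\om^2-\ii\om+2\ii s=0$; taking the contour with $-2m+p_i<\om_i<p_i$ for $g_{n,m}$ and $-1-p_i<\om_i<2m-1-p_i$ for $h_{n,m}$ — the latter strip being the image of the former under $\om\mapsto-\ii-\om$ — and also staying away from the zeros of the discriminant, keeps the integrand analytic; together with $|\d_p^j\Hh_m(\om-p)|=\Oc(\ee^{-|\om_r|/m})$ and the $\Oc(1)$ bounds on the expectations $\Eb|\d_p^j\d_s^k(\,\cdots\,)|$ as $|\om_r|\to\infty$, this justifies both uses of Fubini and both uses of the Leibniz rule, exactly as in~\eqref{eq:sbko-2}. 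Reassembling, weighting by $\ee^{-n\Del}$, summing over $|n|\le q$, and letting $m\to\infty$ and then $q\to\infty$ gives~\eqref{eq:dbko-pow-exp}.

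The step I expect to be the main obstacle is the bookkeeping of the nested limits together with the contour constraints, rather than any one identity: one must check that the dominating functions really are integrable after the unbounded power-exponential has been fed into the bounded-payoff framework of Theorem~\ref{thm:dbko}; that the outer limit $q\to\infty$ genuinely commutes with $\Eb$ despite the exponentially growing weights $\ee^{-n\Del}$, which relies on the ``width $\tfrac12$'' observation so that only one term per sum is ever nonzero; and that the involution $\om\mapsto-\ii-\om$ used for the reflected piece maps the admissible strip for $g_{n,m}$ onto the admissible strip for $h_{n,m}$ without the contour ever crossing a pole of $\Hh_m$ or a branch point of $u$.
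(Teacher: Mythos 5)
Your proof takes the same route as the paper's: start from the replication formula of Theorem~\ref{thm:dbko}, smooth the indicator with $H_m$, Fourier-expand via $\Fv[H_m]=\Hh_m$, invoke Theorem~\ref{thm:E1=E2}, and justify the interchanges by the decay of $\Hh_m$ and dominated convergence. The extra details you supply --- the width-$\tfrac{1}{2}$ argument that at most one summand is nonzero on each path, and the invariance of the discriminant $\tfrac{1}{4}-\om^2-\ii\om+2\ii s$ (hence of $u(\om,s)$) under the involution $\om\mapsto-\ii-\om$, which maps the $g_{n,m}$ strip onto the $h_{n,m}$ strip --- are correct and make explicit what the paper delegates to a cited symmetry theorem and to a ``straightforward computation.''
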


\begin{proof}
As many of the arguments for passing limits and derivatives through integrals and expectations are analogous to those given in the proof of Proposition \ref{prp:sbko-pow-exp}, we shall not repeat them here.
Noting that the value of any claim is equal to the value of its replication portfolio, we have from Theorem \ref{thm:dbko} that 
\begin{align}
\Eb \Ib{\tau_{L,U}>T}\varphi(X_T,\<X\>_T)
	&=	\sum_{n=-\infty}^\infty \ee^{-n\Del} \Eb \Big(
			\varphi^*({2n\Del+X_T},\<X\>_T) - \ee^{X_T-L} \varphi^*({2n\Del+2L-X_T},\<X\>_T)
			\Big) , \label{eq:dbko-sum}
\end{align}
where passing the expectation through the infinite sum is allowed by the arguments given in the proof of \cite[Theorem 5.18]{pcs}.

Examining the first term in the expectation above, with $\varphi(X_T,\<X\>_T) = X_T^j \<X\>_T^k \ee^{\ii p X_T + \ii s \<X\>_T}$, we have
\begin{align}
&\Eb \varphi^*({2n\Del+X_T},\<X\>_T) \\
	&=	\Eb \Ib{L < 2n\Del + X_T < U} (2n\Del+X_T)^j \<X\>_T^k \ee^{\ii p (2n\Del+X_T) + \ii s \<X\>_T} \\
	&=	\lim_{m \to \infty} \Eb \Big( H_m(2n\Del+X_T - L) - H_m(2n\Del+X_T - U) \Big) (2n\Del+X_T)^j \<X\>_T^k \ee^{\ii p (2n\Del+X_T) + \ii s \<X\>_T} \\
	&=	\lim_{m \to \infty} (-\ii \d_p)^j (-\ii \d_s)^k \Eb \Big( H_m( 2n\Del+X_T - L) - H_m( 2n\Del+X_T - U) \Big) \ee^{\ii p (2n\Del+X_T) + \ii s \<X\>_T} ,
\end{align}
where $H_m(x) := \tfrac{1}{2} ( 1 + \tanh m x )$.
Next, using $\Fv[ H_m ] = \Hh_m$, we compute
\begin{align}
&(-\ii \d_p)^j (-\ii \d_s)^k \Eb \Big( H_m( 2n\Del+X_T - L) - H_m( 2n\Del+X_T - U) \Big) \ee^{\ii p (2n\Del+X_T) + \ii s \<X\>_T} \\
	&=	\int_\Rb \ee^{\ii \om 2 n \Del}  (-\ii \d_p)^j (-\ii \d_s)^k \Big( \ee^{- \ii (\om - p) L}  - \ee^{- \ii (\om - p) U} \Big) \Hh_m(\om-p) 
			\Eb \ee^{\ii \om X_T + \ii s \<X\>_T}\dd \om_r \\
	&=	\int_\Rb \ee^{\ii \om 2 n \Del}  (-\ii \d_p)^j (-\ii \d_s)^k \Big( \ee^{- \ii (\om - p) L}  - \ee^{- \ii (\om - p) U} \Big) \Hh_m(\om-p) 
			\ee^{\ii (\om - u(\om,s))X_0}\Eb \ee^{\ii u(\om,s) X_T}  \dd \om_r\\
	&=	\Eb g_{n,m}(X_T) .
\end{align}
Similarly, a straightforward computation shows
\begin{align}
\Eb \ee^{X_T-L} \varphi^*({2n\Del+2L-X_T},\<X\>_T)
	&=	\lim_{m \to \infty} \Eb h_{n,m}(X_T) .
\end{align}
Thus
\begin{align}
\Eb \Ib{\tau_{L,U}>T}\varphi(X_T,\<X\>_T)
	&=	\sum_{n=-\infty}^\infty \lim_{m \to \infty}  \ee^{-n\Del} \Eb \Big( g_{n,m}(X_T) - h_{n,m}(X_T) \Big) \\
	&=	\lim_{q \to \infty} \lim_{m \to \infty} \Eb \sum_{n=-q}^q \ee^{-n\Del} \Big( g_{n,m}(X_T) - h_{n,m}(X_T) \Big) ,
\end{align}
as claimed.
\end{proof}

Figure \ref{fig:dbko} plots the function whose expectation, in the limit as $m,q \to \infty$, prices a double barrier knock-out variance swap, which pays $\Ib{\tau_{L,U} >T} \<X\>_T$.

%
%



\section{Single barrier knock-in claims}
\label{sec:sbki}
This section considers \emph{single barrier knock-in claims} with payoffs of the form 
\begin{align}
\text{Single barrier knock-in}:&&
	&\Ib{\tau_H \leq T} \varphi(X_T-X_{\tau_H^*},\<X\>_T - \<X\>_{\tau_H^*}) .
\end{align}
The following trading strategy replicates a knock-in power-exponential claim with a single barrier $L<X_0$.

\begin{theorem}[Replication of single barrier knock-in power-exponential claims]
\label{thm:sbki}
Fix $L < X_0$, $n,m \in \{0\} \cup \Nb$ and $\om,s \in \Cb$ and assume $2 \ii s-\omega ^2-\ii \om +\tfrac{1}{4} \neq 0$.
The following trading strategy replicates the single barrier knock-in power-exponential payoff
\begin{align}
\Ib{\tau_L \leq T} (X_T-X_{\tau_L^*})^n (\<X\>_T - \<X\>_{\tau_L^*})^m \ee^{\ii \om (X_T-X_{\tau_L^*}) + \ii s ( \<X\>_T - \<X\>_{\tau_L^*}) } . \label{eq:sbki}
\end{align}
At time $0$ hold a European claim with payoff
\begin{align}
	(-\ii \d_\om)^n (-\ii \d_s)^m \psi_L^{\ki}(X_T;\om,s) , \label{eq:psi.sbki.0}
\end{align}
where we have defined
\begin{align}
\psi_L^{\ki}(X_T) \equiv \psi_L^{\ki}(X_T;\om,s)
	&=	\Ib{X_T < L} \ee^{(1 - \ii u) (X_T - L)} + \Ib{X_T \leq L}  \ee^{\ii u (X_T - L)} , \label{eq:psi.sbki}
\end{align}
with $u \equiv u_\pm(\om,s)$ as given in \eqref{eq:u}.  If and when the claim knocks in, exchange the claim \eqref{eq:psi.sbki.0} for the knock-in claim \eqref{eq:sbki} at no cost.  After the exchange, the knock-in claim \eqref{eq:sbki} can be replicated as a European-style power-exponential claim.
\end{theorem}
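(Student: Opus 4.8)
The plan is to reduce the whole statement to a single ``costless-exchange'' identity at the barrier. Write $\tau := \tau_L$, and recall that by path-continuity of $X$ we have $\tau^* = \tau$ and $X_{\tau^*} = L$ on $\{\tau \leq T\}$. On $\{\tau > T\}$ one has $X_t > L$ for all $t \leq T$, hence $X_T > L$ and $\Ib{X_T < L} = \Ib{X_T \leq L} = 0$, so $\psi_L^{\ki}(X_T;\om,s)$ and all of its $(\om,s)$-derivatives vanish: the European claim \eqref{eq:psi.sbki.0} pays $0$, matching the knock-in payoff \eqref{eq:sbki}, which also vanishes there. On $\{\tau \leq T\}$ the knock-in claim \eqref{eq:sbki} becomes, at time $\tau$, the European-style claim $(X_T - X_{\tau^*})^n (\<X\>_T - \<X\>_{\tau^*})^m \ee^{\ii \om (X_T - X_{\tau^*}) + \ii s (\<X\>_T - \<X\>_{\tau^*})}$; since $X_{\tau^*},\<X\>_{\tau^*} \in \Fc_{\tau^*}$, expanding the two binomials writes this as a finite linear combination, with time-constant $\Fc_{\tau^*}$-measurable weights on $[\tau,T]$, of power-exponential claims $X_T^a \<X\>_T^b \ee^{\ii \om X_T + \ii s \<X\>_T}$ with $a \leq n$ and $b \leq m$, each replicable on $[\tau,T]$ by a self-financing portfolio via the corollary following Theorem~\ref{thm:complex} (Replication of European-style power-exponential claims), whose hypothesis is exactly the standing assumption $2\ii s - \om^2 - \ii \om + \tfrac{1}{4} \neq 0$. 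Hence the announced strategy works once we check that, on $\{\tau \leq T\}$, the time-$\tau$ value of the European claim \eqref{eq:psi.sbki.0} equals the time-$\tau$ value of the knock-in claim \eqref{eq:sbki}.

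I would prove this identity first for $n = m = 0$. On $\{\tau \leq T\}$, pulling the $\Fc_{\tau^*}$-measurable factor $\ee^{-\ii \om X_{\tau^*} - \ii s \<X\>_{\tau^*}}$ out of $\Eb_{\tau^*}$, applying \eqref{eq:E1=E2} with the stopping time $\tau$, and using $X_{\tau^*} = L$ gives
\begin{align}
\Eb_{\tau^*} \ee^{\ii \om (X_T - X_{\tau^*}) + \ii s (\<X\>_T - \<X\>_{\tau^*})}
	&= \ee^{-\ii u X_{\tau^*}} \, \Eb_{\tau^*} \ee^{\ii u X_T}
	= \Eb_{\tau^*} \ee^{\ii u (X_T - L)} ,
\end{align}
with $u \equiv u_\pm(\om,s)$ as in \eqref{eq:u}. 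Writing $\ee^{\ii u (X_T - L)} = \Ib{X_T \leq L} \ee^{\ii u (X_T - L)} + \Ib{X_T > L} \ee^{\ii u (X_T - L)}$ and applying the geometric put-call symmetry identity \eqref{eq:pcs2} to the second term---with $G(x,v) = \Ib{x > L} \ee^{\ii u (x - L)}$ and stopping time $\tau$, once more using $X_{\tau^*} = L$ so that $2 X_{\tau^*} - X_T = 2L - X_T$ and $\{2L - X_T > L\} = \{X_T < L\}$---converts that term into $\Eb_{\tau^*} \Ib{X_T < L} \ee^{(1 - \ii u)(X_T - L)}$. Adding the two pieces reproduces exactly $\Eb_{\tau^*} \psi_L^{\ki}(X_T;\om,s)$ as defined in \eqref{eq:psi.sbki}; the strict inequality in the first indicator of \eqref{eq:psi.sbki} together with the non-strict one in the second is precisely what makes this an \emph{exact} identity, requiring no atomlessness hypothesis on $X_T$.

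For general $n,m$ I would apply $(-\ii \d_\om)^n (-\ii \d_s)^m$ to the $n = m = 0$ identity $\Eb_{\tau^*} \ee^{\ii \om (X_T - X_{\tau^*}) + \ii s (\<X\>_T - \<X\>_{\tau^*})} = \Eb_{\tau^*} \psi_L^{\ki}(X_T;\om,s)$ and move the operator inside $\Eb_{\tau^*}$ on both sides: on the left this brings down the prefactor $(X_T - X_{\tau^*})^n (\<X\>_T - \<X\>_{\tau^*})^m$, i.e.\ the time-$\tau$ value of the knock-in payoff \eqref{eq:sbki}; on the right it yields the time-$\tau$ value of the European claim \eqref{eq:psi.sbki.0}. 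The one step requiring genuine care---which I expect to be the main technical obstacle---is justifying this differentiation under the conditional-expectation sign uniformly in a neighbourhood of $(\om,s)$: $u_\pm(\om,s)$ involves a square root and is only locally analytic off the zero set of $\tfrac{1}{4} - \ii \om + 2 \ii s - \om^2$, which is excluded by hypothesis, so near any admissible $(\om,s)$, using the chain rule together with $X_{\tau^*} = L$ and $0 \leq \<X\>_{\tau^*} \leq c$, one dominates all the relevant $(\om,s)$-derivatives of $\psi_L^{\ki}(X_T;\om,s)$ and of $\ee^{\ii \om (X_T - X_{\tau^*}) + \ii s (\<X\>_T - \<X\>_{\tau^*})}$ on $\{\tau \leq T\}$ by $c_1 \ee^{c_1(|X_T| + |\<X\>_T|)}$ with $c_1$ depending only on that neighbourhood; since $\Eb_0 \, c_1 \ee^{c_1(|X_T| + |\<X\>_T|)} < \infty$ by \eqref{eq:bound} and the conditional Gaussianity of $X_T$ recorded in \eqref{eq:X.normal}, the dominated-convergence argument of Corollary~\ref{cor:pow-exp} (cf.\ \eqref{eq:leibniz}) applies verbatim. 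Together with the first paragraph, which also covers the theorem's closing assertion that the post-exchange claim is replicated as a European-style power-exponential claim, this completes the proof.
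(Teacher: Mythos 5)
Your proposal is correct and follows essentially the same structure as the paper's proof: establish that the two claims coincide on $\{\tau_L > T\}$, then verify the costless exchange on $\{\tau_L \leq T\}$ by combining Theorem~\ref{thm:E1=E2} with a put--call symmetry identity, and finally commute $(-\ii\d_\om)^n(-\ii\d_s)^m$ with the conditional expectation via the domination bound \eqref{eq:leibniz}. The one genuine difference is that the paper cites the identity $\Ib{\tau_L\leq T}\Eb_{\tau_L^*}\ee^{\ii u(X_T-L)}=\Ib{\tau_L\leq T}\Eb_{\tau_L^*}\psi_L^\ki(X_T)$ directly from equation~(5.7) of \cite{pcs}, whereas you re-derive it inline by splitting $\ee^{\ii u(X_T-L)}$ across $\{X_T\leq L\}$ and $\{X_T>L\}$ and reflecting the latter term with \eqref{eq:pcs2}. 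Your derivation is verifiably correct (the reflected term $\Eb_{\tau^*}\ee^{X_T-L}\Ib{2L-X_T>L}\ee^{\ii u(L-X_T)}=\Eb_{\tau^*}\Ib{X_T<L}\ee^{(1-\ii u)(X_T-L)}$ matches the first summand of $\psi_L^\ki$, and the strict/non-strict indicators align exactly), which makes the argument more self-contained and also makes explicit why no atomlessness assumption on $X_T$ is needed here, in contrast to Propositions~\ref{prp:sbko-pow-exp} and \ref{prp:sbrb-pow-exp}. The reordering — prove $n=m=0$ first, then differentiate — versus the paper's pull-out/push-in of the Leibniz operator is purely cosmetic.
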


\begin{proof}
If $\tau_L > T$, then $X_T > L$ and thus both the knock-in claim \eqref{eq:sbki} and the European claim \eqref{eq:psi.sbki.0} expire worthless.  
It remains to show that, if $\tau_L \leq T$, the claim \eqref{eq:psi.sbki.0} can be exchanged for the claim \eqref{eq:sbki} at no cost.  Recalling that $S=\ee^X$ satisfies geometric put-call symmetry, we have from \cite[equation (5.7)]{pcs} that
\begin{align}
\Ib{\tau_L \leq T} \Eb_{\tau_L^*} \ee^{\ii u (X_T - L)} 
	&=	\Ib{\tau_L \leq T} \Eb_{\tau_L^*} \psi_L^{\ki}(X_T) . \label{eq:pcs.result}
\end{align}
Thus 
\begin{align}
&\Ib{\tau_L \leq T}  \Eb_{\tau_L^*} (X_T-X_{\tau_L^*})^n (\<X\>_T - \<X\>_{\tau_L^*})^m \ee^{\ii \om (X_T-X_{\tau_L^*}) + \ii s (\<X\>_T - \<X\>_{\tau_L^*}) } \\
	&=	\Ib{\tau_L \leq T} (-\ii \d_\om)^n (-\ii \d_s)^m \Eb_{\tau_L^*} \ee^{\ii \om (X_T-X_{\tau_L^*}) + \ii s (\<X\>_T - \<X\>_{\tau_L^*}) } &
	&		\text{(by Leibniz)} \\
	&=	\Ib{\tau_L \leq T} (-\ii \d_\om)^n (-\ii \d_s)^m \Eb_{\tau_L^*} \ee^{\ii u (X_T - L)}  &
	&		\text{(by \eqref{eq:E1=E2})} \\
	&=	\Ib{\tau_L \leq T} (-\ii \d_\om)^n (-\ii \d_s)^m \Eb_{\tau_L^*} \psi_L^{\ki}(X_T)   &
	&		\text{(by \eqref{eq:pcs.result})} \\
	&=	\Ib{\tau_L \leq T} \Eb_{\tau_L^*} (-\ii \d_\om)^n (-\ii \d_s)^m \psi_L^{\ki}(X_T) ,  &
	&		\text{(by Leibniz)} \label{eq:proof.sbki}
\end{align}
where the two uses of the Leibniz rule are justified by \eqref{eq:leibniz}.
\end{proof}

\begin{remark}
\label{rmk:sbki}
To replicate the single barrier knock-in power-exponential claim with payoff
\begin{align}
\Ib{\tau_U \leq T} (X_T-X_{\tau_U^*})^n (\<X\>_T - \<X\>_{\tau_U^*})^m \ee^{\ii \om (X_T - X_{\tau_U^*})+ \ii s ( \<X\>_T - \<X\>_{\tau_U^*}) }
\end{align}
where $U > X_0$, one should hold at time 0 a the European claim with payoff $ (-\ii \d_\om)^n (-\ii \d_s)^m\psi_U^{\ki}(X_T;\om,s)$ where
\begin{align}
\psi_U^{\ki}(X_T) \equiv \psi_U^{\ki}(X_T;\om,s)
	&=	 \Ib{X_T > U} \ee^{(1 - \ii u) (X_T - U)} + \Ib{X_T \geq U}  \ee^{\ii u (X_T - U)} , \label{eq:psiU.sbki}
\end{align}
and, if $\tau_U \leq T$, exchange the European claim at time $\tau_U$ for the knock-in claim at no cost.
\end{remark}

\begin{proposition}[Prices of single barrier knock-in claims on fractional powers of quadratic variation]
\label{prp:frac.sbki}
For any $0<r<1$ and $L < X_0$ we have 
\begin{align}
\Eb \Ib{\tau_L \leq T}( \<X\>_T - \<X\>_{\tau_L^*} )^r 
	&= \Eb g(X_T) , 
\end{align}
where 
\begin{align}
g(x)
	&:=	\frac{r}{\Gam(1-r)} \int_0^\infty \frac{1}{z^{r+1}} 
			\Big( \psi_L^{\ki}(x;0,0) - \psi_L^{\ki}(x;0,\ii z) \Big) \dd z . \label{eq:g.frac.sbki}
\end{align}
Here, $\Gam$ is the Euler Gamma function and $\psi_L^\ki(x;\om,s)$ is defined in \eqref{eq:psi.sbki}.
\end{proposition}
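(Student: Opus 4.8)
The plan is to combine the elementary gamma--Frullani identity
\begin{align}
y^r
	&=	\frac{r}{\Gam(1-r)} \int_0^\infty \dd z \, \frac{1 - \ee^{-z y}}{z^{r+1}} , &
y
	&\geq 0 , \quad 0 < r < 1 , \label{eq:frullani.plan}
\end{align}
with the $n=m=0$ case of Theorem \ref{thm:sbki}. Set $Y := \<X\>_T - \<X\>_{\tau_L^*} \geq 0$ and note that $Y \leq \<X\>_T = \int_0^T \sig_t^2 \dd t < c$ by \eqref{eq:bound}, so $\Eb Y^r < \infty$. Applying \eqref{eq:frullani.plan} with $y = Y$, multiplying by $\Ib{\tau_L \leq T}$, taking $\Eb$ and using Tonelli's theorem (every integrand is nonnegative and the resulting double integral equals $\tfrac{\Gam(1-r)}{r}\Eb \Ib{\tau_L \leq T} Y^r < \infty$), one obtains
\begin{align}
\Eb \Ib{\tau_L \leq T} Y^r
	&=	\frac{r}{\Gam(1-r)} \int_0^\infty \dd z \, \frac{1}{z^{r+1}} \, \Eb \Big( \Ib{\tau_L \leq T} \big( 1 - \ee^{-z Y} \big) \Big) .
\end{align}

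Next I would rewrite the inner expectation in terms of $\psi_L^\ki$. Fix $z > 0$ with $z \neq \tfrac18$, so that the nondegeneracy condition $2 \ii s - \om^2 - \ii \om + \tfrac14 \neq 0$ holds both at $(\om,s) = (0,0)$ and at $(\om,s) = (0,\ii z)$. Since $\Ib{\tau_L \leq T}$ is $\Fc_{\tau_L^*}$-measurable, the tower property together with the identity established in the proof of Theorem \ref{thm:sbki}, specialized to $n = m = 0$ and $\om = 0$ (applied once with $s=0$ and once with $s = \ii z$, using $\ee^{\ii(\ii z)Y} = \ee^{-zY}$), gives
\begin{align}
\Ib{\tau_L \leq T}
	&=	\Ib{\tau_L \leq T} \Eb_{\tau_L^*} \psi_L^\ki(X_T;0,0) , &
\Ib{\tau_L \leq T} \Eb_{\tau_L^*} \ee^{-z Y}
	&=	\Ib{\tau_L \leq T} \Eb_{\tau_L^*} \psi_L^\ki(X_T;0,\ii z) ,
\end{align}
hence $\Eb( \Ib{\tau_L \leq T}(1 - \ee^{-zY})) = \Eb( \Ib{\tau_L \leq T}(\psi_L^\ki(X_T;0,0) - \psi_L^\ki(X_T;0,\ii z)))$. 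Because $X$ has continuous paths and $X_0 > L$, on $\{\tau_L > T\}$ the path stays strictly above $L$, so $X_T > L$ and both indicator factors in \eqref{eq:psi.sbki} vanish; thus $\psi_L^\ki(X_T;0,0) - \psi_L^\ki(X_T;0,\ii z) = 0$ there and the factor $\Ib{\tau_L \leq T}$ may be dropped:
\begin{align}
\Eb \Big( \Ib{\tau_L \leq T}(1 - \ee^{-zY}) \Big)
	&=	\Eb \Big( \psi_L^\ki(X_T;0,0) - \psi_L^\ki(X_T;0,\ii z) \Big) .
\end{align}
Substituting this in and pulling $\Eb$ back through the $z$-integral by Fubini produces $\Eb g(X_T)$ with $g$ as in \eqref{eq:g.frac.sbki}.

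The delicate point is this last Fubini interchange, because $z^{-r-1}$ is not integrable near $z=0$ on its own; one must exhibit the linear vanishing of the $\psi$-difference there. Take the branch $u = u_+$ in \eqref{eq:psi.sbki} (the sum $\ee^{(1-\ii u)\xi} + \ee^{\ii u\xi}$ is symmetric under $\ii u \leftrightarrow 1-\ii u$, so $\psi_L^\ki$ is independent of the choice). Then \eqref{eq:u} gives $\ii u_+(0,\ii z) = \tfrac12 - \sqrt{\tfrac14 - 2z}$, whence $|\ii u_+(0,\ii z)| \leq 4 z$ for $z \in (0,\tfrac18]$; writing $\xi = x - L$, using $\ee^\xi = \ee^{(1-\ii u)\xi}\ee^{\ii u\xi}$ and the elementary bounds $|\ee^{(1-\ii u_+)\xi}| \leq 1$ and $|\ee^w - 1| \leq |w|$ for $\Re w \leq 0$, one gets $|\psi_L^\ki(x;0,0) - \psi_L^\ki(x;0,\ii z)| \leq 2|\ii u_+(0,\ii z)|\,|\xi| \leq 8 z|x-L|$ on $\{x \leq L\}$ for $z \in (0,\tfrac18]$, while the crude bound $|\psi_L^\ki(x;0,\ii z)| \leq 2$ on $\{x \leq L\}$ (valid for every $z > 0$, since $\ii u_+(0,\ii z)$ and $1-\ii u_+(0,\ii z)$ have nonnegative real part) handles $z \geq \tfrac18$. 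Together these furnish a dominating function of the form $C_1|x-L| + C_2$ for $\int_0^\infty \dd z\, z^{-r-1}|\psi_L^\ki(x;0,0) - \psi_L^\ki(x;0,\ii z)|$ (finite since $r < 1$), and $\Eb|X_T| < \infty$ by \eqref{eq:bound}, so Fubini is justified; the single excluded point $z = \tfrac18$ contributes nothing. All the remaining steps---the Tonelli step, the $n=m=0$ specialization of Theorem \ref{thm:sbki}, and the path-continuity argument on $\{\tau_L > T\}$---are routine.
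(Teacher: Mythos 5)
Your proof is correct and shares the paper's skeleton: the Frullani/gamma representation \eqref{eq:frac}, a Tonelli step, the $n=m=0$, $\om=0$ specialization of Theorem \ref{thm:sbki} (once at $s=0$, once at $s=\ii z$), dropping the indicator via the inclusion $\{\tau_L>T\}\subset\{X_T>L\}$, and a final Fubini swap. The genuine divergence is in the justification of that last Fubini interchange. The paper works with the \emph{expectation} $\Eb\big|\psi_L^\ki(X_T;0,0)-\psi_L^\ki(X_T;0,\ii z)\big|$, showing it is $\Oc(1)$ as $z\to\infty$ and $\Oc(z)$ as $z\to 0$; the small-$z$ estimate is obtained indirectly via a Cauchy--Schwarz step, a reparametrization $a(z),b(z)$ solving $\ii u(0,\ii a(z))=-\ii u(0,\ii z)$ and $\ii u(0,\ii b(z))=-2\ii u(0,\ii z)$, analyticity of $f(d)=\Eb\,\Ib{\tau_L\leq T}\ee^{-d(\<X\>_T-\<X\>_{\tau_L^*})}$, and a first-order Taylor cancellation $-2a'(0)+b'(0)=0$. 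You instead produce a \emph{pointwise} dominant: the convexity bound $|\ii u_+(0,\ii z)|\leq 4z$ on $(0,\tfrac{1}{8}]$ combined with $|\ee^w-1|\leq|w|$ for $\Re w\leq 0$ gives $|\psi_L^\ki(x;0,0)-\psi_L^\ki(x;0,\ii z)|\leq 8z|x-L|$ for small $z$, the crude modulus bound $\leq 4$ handles $z\geq\tfrac{1}{8}$, and the resulting dominating function $C_1|x-L|+C_2$ is $\Pb$-integrable since $\Eb|X_T|<\infty$. Your route is more elementary --- it bypasses the Cauchy--Schwarz, reparametrization, analyticity and Taylor-expansion machinery entirely --- and yields a bona fide integrable dominant rather than an asymptotic estimate of the expectation, at no cost in generality. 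Your observation that $\psi_L^\ki$ is invariant under $\ii u_+\leftrightarrow\ii u_-$ is also a nice touch that makes the paper's ``similar for $u_-$'' remark superfluous.
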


\begin{proof}
Following the proof of \cite[Proposition 7.1]{rrvd}, we have from \cite[equation (1.2.3)]{schurger2002laplace} that
\begin{align}
v^r
	&=	\frac{r}{\Gam(1-r)} \int_0^\infty \frac{1-\ee^{-z v} }{z^{r+1}} \dd z, &
v
	&\geq 0 , &
0
	&<r<1 . \label{eq:frac}
\end{align}
Therefore
\begin{align}
&\Eb \Ib{\tau_L \leq T} (\<X\>_T - \<X\>_{\tau_L^*})^r \\
	&=	\frac{r}{\Gam(1-r)} \int_0^\infty  \frac{1}{z^{r+1}} \Eb \Ib{\tau_L \leq T} \Big( 1-\ee^{-z (\<X\>_T - \<X\>_{\tau_L^*})} \Big) \dd z &
	&\text{(by \eqref{eq:frac} and Tonelli)}\\
	&=	\frac{r}{\Gam(1-r)} \int_0^\infty \frac{1}{z^{r+1}} \Eb \Big( \psi_L^{\ki}(X_T;0,0) - \psi_L^{\ki}(X_T;0,\ii z) \Big) \dd z &
	&\text{(by Theorem \ref{thm:sbki})}\\
	&=	\Eb g(X_T)  &
	&\text{(by \eqref{eq:g.frac.sbki} and Fubini)}
\end{align}
where the use of Fubini is justified as follows.  As $z\to \infty$ we have $\ii u(\om,\ii z) \to 1/2$ by \eqref{eq:psi.sbki}, hence 
\begin{align}
\Eb | \psi_L^{\ki}(X_T;0,0) - \psi_L^{\ki}(X_T;0,\ii z) | 
	&= \Oc(1) , &
	&\text{as $z \to \infty$.} \label{eq:sbki.0}
\end{align}
Turning to the $z\to 0$ behavior, consider the case $u = u_+$; the proof for $u=u_-$ is similar.
We have
\begin{align}
&\big| \psi_L^\ki(X_T;0,0) - \psi_L^\ki(X_T;0,\ii z) \big| \\
	&\leq \Ib{X_T < L}\ee^{X_T-L} \big|	1 - \ee^{-\ii u(0,\ii z) (X_T - L)} \big| + \Ib{X_T \leq L} \big|	1 - \ee^{\ii u(0,\ii z) (X_T - L)} \big| \\
	&=	\Ib{\tau_L \leq T} \Big(
			\Ib{X_T < L}\ee^{X_T-L} \big|	1 - \ee^{-\ii u(0,\ii z) (X_T - X_{\tau_L^*})} \big| + \Ib{X_T \leq L} \big|	1 - \ee^{\ii u(0,\ii z) (X_T - X_{\tau_L^*})} \big|
			\Big) \\
	&\leq \Ib{\tau_L \leq T} \big|	1 - \ee^{-\ii u(0,\ii z) (X_T - X_{\tau_L^*})} \big| 
			+ \Ib{\tau_L \leq T} \big|	1 - \ee^{\ii u(0,\ii z) (X_T - X_{\tau_L^*})} \big| . \label{eq:sbki.1}
\end{align}
using $u(0,0)=0$ as well as $\Ib{\tau_L > T}\Ib{X_T \leq L} = 0$ and $\Ib{X_T < L}\ee^{X_T-L} \leq 1$.
For $z$ small enough, we have $\ii u(0,\ii z) = 1/2 - \sqrt{1/4 - 2z} \in \Rb$. Thus
\begin{align}
\Big( \Eb \Ib{\tau_L \leq T} \big|	1 - \ee^{-\ii u(0,\ii z) (X_T - X_{\tau_L^*})} \big| \Big)^2
	&\leq \Eb \Ib{\tau_L \leq T} \big|	1 - \ee^{-\ii u(0,\ii z) (X_T - X_{\tau_L^*})} \big|^2 \\
	&=	\Eb \Ib{\tau_L \leq T} \Eb_{\tau_L^*}\big(	1 - 2 \ee^{-\ii u(0,\ii z) (X_T - X_{\tau_L^*})} + \ee^{-2 \ii u(0,\ii z) (X_T - X_{\tau_L^*})}\big) \\
	&=	\Eb \Ib{\tau_L \leq T} \Eb_{\tau_L^*}\big(	1 - 2 \ee^{-a(z) (\<X\>_T - \<X\>_{\tau_L^*})} + \ee^{-b(z) (\<X\>_T - \<X\>_{\tau_L^*})}\big) , \\
a(z)
	&= 	-\tfrac{1}{2} + \tfrac{1}{2} \sqrt{1-8z} + 2 z, \\
b(z)
	&=	-\tfrac{3}{2} + \tfrac{3}{2} \sqrt{1-8z} + 8 z ,
\end{align}
using $\ii u(0,\ii a(z)) = - \ii u(0,\ii z)$ and $\ii u(0,\ii b(z)) = - 2\ii u(0,\ii z)$.  
Now define $f(d):= \Eb \Ib{\tau_L \leq T} \ee^{-d (\<X\>_T - \<X\>_{\tau_L^*})}$.
The function $f$ is analytic by \eqref{eq:bound}.  Hence
\begin{align}
&\Big( \Eb \Ib{\tau_L \leq T} \big|	1 - \ee^{-\ii u(0,\ii z) (X_T - X_{\tau_L^*})} \big| \Big)^2 \\
	&\leq f(0) - 2f(a(z)) + f(b(z)) \\
	&=	\Big( f(0) - 2 f(a(0)) + f(b(0)) \Big) + \Big( - 2 f'(a(0)) a'(0)+ f'(b(0))b'(0) \Big) z + \Oc(z^2)
	=		\Oc(z^2) , \label{eq:sbki.2}
\end{align}
because $a(0) = b(0) = 0$ and $-2 a'(0) + b'(0) = 0$.  A similar computation shows
\begin{align}
\Big( \Eb \Ib{\tau_L \leq T} \big|	1 - \ee^{\ii u(0,\ii z) (X_T - X_{\tau_L^*})} \big| \Big)^2
	&=	\Oc(z^2) . \label{eq:sbki.3}
\end{align}
Combining \eqref{eq:sbki.1}, \eqref{eq:sbki.2} and \eqref{eq:sbki.3}, we have  
\begin{align}
\Eb \big| \psi_L^\ki(X_T;0,0) - \psi_L^\ki(X_T;0,\ii z) \big|
	&=	\Oc(z) &
	&\text{as $z \to 0$.} \label{eq:sbki.4}
\end{align}
From \eqref{eq:sbki.0} and \eqref{eq:sbki.4}, we have
\begin{align}
\frac{r}{\Gam(1-r)} \int_0^\infty \frac{1}{z^{r+1}} \Eb \Big| \psi_L^{\ki}(X_T;0,0) - \psi_L^{\ki}(X_T;0,\ii z) \Big| \dd z 
	&< \infty, 
\end{align}
which verifies the conditions of Fubini.
\end{proof}

\begin{proposition}[Ratio claims]
\label{prp:ratio.sbki}
For any $r,\eps > 0$ and $p \in \Cb$ we have
\begin{align}
\Eb \Bigg( \Ib{\tau_L\leq T} \frac{(X_T-X_{\tau_L^*}) \ee^{\ii p (X_T-X_{\tau_L^*})} }{ (\<X\>_T - \<X\>_{\tau_L^*}+ \eps)^{r} } \Bigg)
	&=	\Eb g(X_T) ,
\end{align}
where the function $g$ is given by
\begin{align}
g(x)
	&=	\frac{1}{r\Gam(r)}  \int_0^\infty (-\ii \d_p) \psi_L^\ki(x;p,\ii z^{1/r}) \ee^{- z^{1/r} \eps } \dd z. \label{eq:g.ratio.sbki}
\end{align}
Here, $\Gam$ is the Euler Gamma function and $\psi_L^\ki(x;\om,s)$ is as defined in \eqref{eq:psi.sbki}.
\end{proposition}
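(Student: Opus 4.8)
The plan is to reduce the ratio claim to single barrier knock-in power-exponential claims --- which are handled by Theorem~\ref{thm:sbki} --- by expanding the negative power $(\<X\>_T-\<X\>_{\tau_L^*}+\eps)^{-r}$ through the elementary Gamma-function identity
\begin{align}
w^{-r}
	&=	\frac{1}{\Gam(r)}\int_0^\infty \dd t\, t^{r-1}\ee^{-t w} , &
w	&>	0 , &
r	&>	0 ,
\end{align}
which plays here the role that the Schürger identity \eqref{eq:frac} plays in Proposition~\ref{prp:frac.sbki}; since it is valid for all $r>0$, no upper bound on $r$ is needed. Applying it with $w=\<X\>_T-\<X\>_{\tau_L^*}+\eps\geq\eps>0$ and interchanging $\Eb$ with the $t$-integral, the left-hand side becomes $\tfrac{1}{\Gam(r)}\int_0^\infty \dd t\, t^{r-1}\ee^{-t\eps}\,\Eb\,[\,\Ib{\tau_L\leq T}\,(X_T-X_{\tau_L^*})\,\ee^{\ii p(X_T-X_{\tau_L^*})-t(\<X\>_T-\<X\>_{\tau_L^*})}\,]$. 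Writing $\ee^{-t(\<X\>_T-\<X\>_{\tau_L^*})}=\ee^{\ii(\ii t)(\<X\>_T-\<X\>_{\tau_L^*})}$, the inner expectation is precisely the price of the knock-in power-exponential claim \eqref{eq:sbki} with $(n,m)=(1,0)$, $\om=p$, $s=\ii t$; by Theorem~\ref{thm:sbki} it equals $\Eb\,(-\ii\d_p)\psi_L^{\ki}(X_T;p,\ii t)$. The non-degeneracy hypothesis $2\ii s-\om^2-\ii\om+\tfrac14\neq0$ there becomes $-2t-p^2-\ii p+\tfrac14\neq0$, which can fail for at most one $t\in(0,\infty)$ and therefore holds for a.e.\ $t$. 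A second interchange of $\Eb$ and $\int_0^\infty\dd t$, followed by the substitution $t=z^{1/r}$ --- under which $t^{r-1}\,\dd t=\tfrac1r\,\dd z$, turning $\tfrac1{\Gam(r)}$ into $\tfrac1{r\Gam(r)}$ and $\ee^{-t\eps}$ into $\ee^{-z^{1/r}\eps}$ --- produces $\Eb\,g(X_T)$ with $g$ exactly as in \eqref{eq:g.ratio.sbki}.

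It remains to justify the two applications of Fubini's theorem. The first is routine: $|\ee^{-t(\<X\>_T-\<X\>_{\tau_L^*})}|\leq1$, so it is enough that $\int_0^\infty \dd t\, t^{r-1}\ee^{-t\eps}=\Gam(r)\eps^{-r}<\infty$ and that $\Eb\,\Ib{\tau_L\leq T}|X_T-X_{\tau_L^*}|\ee^{-\Im(p)(X_T-X_{\tau_L^*})}<\infty$, the latter following from the conditional normality \eqref{eq:X.normal} of $X_T-X_{\tau_L^*}$ given $\Fc_{\tau_L^*}\vee\Fc_T^\sig$ and the bound \eqref{eq:bound}. The second interchange is the main obstacle and requires $\int_0^\infty \dd t\, t^{r-1}\ee^{-t\eps}\,\Eb\,|(-\ii\d_p)\psi_L^{\ki}(X_T;p,\ii t)|<\infty$. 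Here one argues as in the small-$z$/large-$z$ analysis in the proof of Proposition~\ref{prp:frac.sbki}: from \eqref{eq:psi.sbki}, $\psi_L^{\ki}(\,\cdot\,;p,\ii t)$ is supported on $\{X_T\leq L\}$ and built from $\ee^{\ii u(X_T-L)}$ and $\ee^{(1-\ii u)(X_T-L)}$ with $\ii u_\pm(p,\ii t)=\tfrac12\mp\sqrt{\tfrac14-p^2-\ii p-2t}$, whose real parts lie in $[0,1]$ (for real $p$; for general $p\in\Cb$ the moduli acquire only an extra factor $\ee^{C(L-X_T)}$, still $\Pb$-integrable on $\{X_T\leq L\}$ by the conditional-Gaussian structure), while $\d_p u_\pm\propto(\tfrac14-p^2-\ii p-2t)^{-1/2}$ and differentiation contributes at most a polynomial factor $(X_T-L)$. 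Hence $\Eb\,|(-\ii\d_p)\psi_L^{\ki}(X_T;p,\ii t)|$ is $\Oc(1)$ for $t$ in compact subsets of $(0,\infty)$ --- with at most an integrable $|t-t_0|^{-1/2}$ singularity at the unique root $t_0$ of $\tfrac14-p^2-\ii p-2t$, if it is real --- and $\Oc(t^{-1/2})$ as $t\to\infty$; integrated against $t^{r-1}\ee^{-t\eps}$, with $r>0$ giving integrability at $0$ and $\eps>0$ supplying exponential decay at $\infty$, the integral converges.

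The substitution $t=z^{1/r}$ in the last step is an elementary change of variables and reproduces \eqref{eq:g.ratio.sbki} verbatim. In short, the argument is the one in Proposition~\ref{prp:frac.sbki} with the Gamma integral for $w^{-r}$ in place of \eqref{eq:frac}, and with the single derivative $(-\ii\d_p)$ applied to $\psi_L^{\ki}$ absorbing the numerator $X_T-X_{\tau_L^*}$ together with the phase $\ee^{\ii p(X_T-X_{\tau_L^*})}$.
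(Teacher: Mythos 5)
Your proof is correct and takes essentially the same route as the paper. The only cosmetic difference is that the paper quotes the Schürger form $v^{-r}=\frac{1}{r\Gam(r)}\int_0^\infty\ee^{-z^{1/r}v}\,\dd z$ (its equation \eqref{eq:ratio}) directly and argues with the $z$-variable throughout, whereas you start from the standard Gamma integral in $t$ and perform the substitution $t=z^{1/r}$ at the end; these are the same identity. Your Fubini justifications match the paper's (bounded real part of $\ii u_\pm$ on the support $\{X_T\leq L\}$, $\d_p u=\Oc(t^{-1/2})$, an integrable square-root singularity), and your explicit note that the non-degeneracy condition of Theorem~\ref{thm:sbki} fails for at most one $t$ is a small clarification the paper leaves implicit.
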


\begin{proof}
Following the proof of \cite[Proposition 7.2]{rrvd}, we have from \cite[equation (1.0.1)]{schurger2002laplace} that
\begin{align}
\frac{ 1 }{ v^r }
	&=	\frac{1}{r\Gam(r)} \int_0^\infty \ee^{ - z^{1/r} v } \dd z, &
r
	&>	0 . \label{eq:ratio}
\end{align}
from which
\begin{align}
&\Eb \Bigg( \Ib{\tau_L\leq T} \frac{(X_T-X_{\tau_L^*}) \ee^{\ii p (X_T-X_{\tau_L^*})} }{ (\<X\>_T - \<X\>_{\tau_L^*}+ \eps)^{r} } \Bigg) \\
	&=	\frac{1}{r\Gam(r)} \int_0^\infty \Eb \Ib{\tau_L\leq T} (X_T - X_{\tau_L^*} ) 
			\ee^{\ii p (X_T - X_{\tau_L^*} ) - z^{1/r} (\<X\>_T - \<X\>_{\tau_L^*}+ \eps)} \dd z  &
	&		\text{(by \eqref{eq:ratio} and Fubini)}\\
	&=	\frac{1}{r\Gam(r)} \int_0^\infty(-\ii \d_p) \Eb \Ib{\tau_L\leq T}
			\ee^{\ii p (X_T - X_{\tau_L^*} ) - z^{1/r} (\<X\>_T - \<X\>_{\tau_L^*}+ \eps)} \dd z &
	&		\text{(by Leibniz)}\\
	&=	\frac{1}{r\Gam(r)} \int_0^\infty (-\ii \d_p) \Eb \psi_L^\ki(X_T;p,\ii z^{1/r})\ee^{ - z^{1/r} \eps} \dd z &
	&		\text{(by Theorem \ref{thm:sbki})}\\
	&=	\Eb g(X_T) . &
	&		\text{(by Leibniz, \eqref{eq:g.ratio.sbki} and Fubini)}
\end{align}
The first use of Fubini is justified as, for all $p \in \Cb$ and $z \geq 0$, we have
\begin{align}
\Eb \big| \Ib{\tau_L\leq T} (X_T - X_{\tau_L^*} ) \ee^{\ii p (X_T - X_{\tau_L^*} ) - z^{1/r} (\<X\>_T - \<X\>_{\tau_L^*})} \big|
	&\leq \Eb \big| \Ib{\tau_L\leq T} (X_T - X_{\tau_L^*} ) \ee^{\ii p (X_T - X_{\tau_L^*} )} \big|
	 < \infty , 
\end{align}
from which 
\begin{align}
\int_0^\infty \Eb \Big| \Ib{\tau_L\leq T} (X_T - X_{\tau_L^*} ) \ee^{\ii p (X_T - X_{\tau_L^*} ) - z^{1/r} (\<X\>_T - \<X\>_{\tau_L^*}+ \eps)} \Big| \dd z 
	&< \infty .
\end{align}
The two uses of the Leibniz rule are by \eqref{eq:leibniz}.
The second use of Fubini is justified as follows.
By \eqref{eq:psi.sbki}, 
\begin{align}
&-\ii \d_p \psi_L^\ki(X_T;p,\ii z^{1/r}) \\
	&=	\Big( -\Ib{X_T < L} \ee^{(X_T - L) - \ii u(p,\ii z^{1/r}) (X_T - L)} + \Ib{X_T \leq L} \ee^{\ii u(p,\ii z^{1/r}) (X_T - L)} \Big) 
			\d_p u(p,\ii z^{1/r})(X_T - L) .
\end{align}
where, from \eqref{eq:u}, 
\begin{align}
\d_p u_\pm(p,\ii z^{1/r})
	&=	\frac{\pm (1-2 \ii p)}{\sqrt{1-4 p (p+\ii)-8 z^{1/r}}} . \label{eq:dpu}
\end{align}
As $\ii u(p,\ii z^{1/r}) \to 1/2$ and $\d_p u(p,\ii z^{1/r}) \to 0$ as $z \to \infty$, it follows that
\begin{align}
\Eb \big| -\ii \d_p \psi_L^\ki(X_T;p,\ii z^{1/r}) \big|
	&= \Oc(1) , &
	&		\text{as $z \to \infty$.}
\end{align}
Therefore
\begin{align}
\int_0^\infty\Eb \big| (-\ii \d_p) \psi_L^\ki(X_T;p,\ii z^{1/r}) \big| \ee^{ - z^{1/r} \eps}\dd z 
	< \infty , \label{eq:integral}
\end{align}
where any possible singularity in the integrand of \eqref{eq:integral} due to the denominator in \eqref{eq:dpu} will not cause the integral to explode as, for any $a \in \Rb_+$ we have
\begin{align}
\int_0^\infty\Big| \frac{\ee^{- \eps z^{1/r}}}{\sqrt{a-z^{1/r}}} \Big|\, \dd z 
	&=	\int_0^\infty r x^{r-1} \ee^{- \eps x} \Big| \frac{1}{\sqrt{a-x}} \Big|\, \dd x 
	< \infty ,
\end{align}
thus justifying the second use of Fubini.
\end{proof}

Figure \ref{fig:sbki} plots the European claims \eqref{eq:g.frac.sbki} and \eqref{eq:g.ratio.sbki} whose expectations price, respectively
\begin{align} 
\text{single barrier knock-in volatility swap}:&&
\Ib{\tau_L \leq T} \sqrt{\<X\>_T - \<X\>_{\tau_L^*}}, \label{eq:VolSwap} \\
\text{single barrier knock-in realized Sharpe ratio}:&&
\Ib{\tau_L \leq T} \frac{ X_T - X_{\tau_L^*} }{ \sqrt{\<X\>_T - \<X\>_{\tau_L^*} + \eps} } . \label{eq:SharpeRatio}
\end{align}

%
%



\section{Single barrier rebate claims}
\label{sec:sbr}
This section considers \emph{single barrier rebate} claims with payoffs of the form
\begin{align}
\text{Single barrier rebate}:&&
	 \Ib{\tau_H \leq T}\varphi(\<X\>_{\tau_H^*}) ,
\end{align}
paid at time $\tau_H^*$.
Define $v_\pm: \Cb \to \Cb$ by
\begin{align}
v_\pm(s) &:= \ii \Big( -\tfrac{1}{2} \pm \sqrt{ \tfrac{1}{4} - 2 \ii s} \Big).  \label{eq:vM}
\end{align}
%
As with $u_\pm(\om,s)$, when it causes no confusion, we will omit the subscript $\pm$ and the argument $s$ from $v_\pm(s)$.
The following trading strategy replicates a single barrier rebate power-exponential claim.

\begin{theorem}[Replication of single barrier rebate power-exponential claims]
\label{thm:sbr}
Fix $s \in \Cb\setminus\{-\ii/8\}$, $m \in \{0\} \cup \Nb$ and $H \in \Rb$.  
Define 
\begin{align}
\psi_H^\rb(X_T,\<X\>_T) \equiv \psi_H^\rb(X_T,\<X\>_T;s)
	&=	\ee^{\ii v (X_T - H) + \ii s \<X\>_T} , \label{eq:psi.rb}
\end{align}
where $v \equiv v(s)$ is defined in \eqref{eq:vM}.  
Then the following trading strategy replicates a single barrier rebate power-exponential claim that pays
\begin{align}
\Ib{\tau_H \leq T} \<X\>_{\tau_H^*}^m \ee^{\ii s\<X\>_{\tau_H^*} } , \label{eq:sbr}
\end{align}
at time $\tau_H^*$.
At time $0$ hold one European-style claim with payoff $(-\ii \d_s)^m \psi_H^\rb(X_T,\<X\>_T;s)$ and 
sell one single barrier knock-out claim with payoff $\Ib{\tau_H>T} (-\ii \d_s)^m \psi_H^\rb(X_T,\<X\>_T;s)$.
If and when $X$ hits the level $H$, the knock-out claim becomes worthless; sell the European-style claim for $\<X\>_{\tau_H^*}^m \ee^{\ii s \<X\>_{\tau_H^*}}$.
\end{theorem}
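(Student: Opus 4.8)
\noindent
The plan is to verify pathwise that the portfolio in the statement reproduces the rebate payoff \eqref{eq:sbr}, treating the events $\{\tau_H>T\}$ and $\{\tau_H\leq T\}$ separately; the essential analytic ingredient is the martingale property from Lemma \ref{lem:martingale}. First I would note that both constituent positions are genuine hedging instruments, replicable with the bond, the underlying and European calls and puts: the European-style power-exponential claim with payoff $(-\ii\d_s)^m\psi_H^\rb(X_T,\<X\>_T;s)$ by the results of Section \ref{sec:euro} (the non-degeneracy condition $2\ii s-\omega^2-\ii\omega+\tfrac{1}{4}\neq 0$ holds automatically at the relevant frequency $\omega=v(s)$, where it equals $\tfrac{1}{4}$, because $v$ satisfies $-\tfrac{1}{2}\ii v+\ii s-\tfrac{1}{2}v^2=0$ by Lemma \ref{lem:martingale}), and the knock-out claim with payoff $\Ib{\tau_H>T}(-\ii\d_s)^m\psi_H^\rb(X_T,\<X\>_T;s)$ by Theorem \ref{thm:sbko} when $H<X_0$ and by the remark following it when $H>X_0$.

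On $\{\tau_H>T\}$ the rebate \eqref{eq:sbr} pays nothing, while the short knock-out claim pays $\Ib{\tau_H>T}(-\ii\d_s)^m\psi_H^\rb(X_T,\<X\>_T;s)=(-\ii\d_s)^m\psi_H^\rb(X_T,\<X\>_T;s)$, which exactly offsets the long European-style claim, so the net payoff is $0$. On $\{\tau_H\leq T\}$ we have $\tau_H^*=\tau_H$ and, by path-continuity of $X$, $X_{\tau_H^*}=H$; at time $\tau_H$ the short knock-out claim is worthless, its payoff $\Ib{\tau_H>T}(\cdots)$ being then deterministically $0$. Hence on this event the portfolio value at $\tau_H$ is just that of the long European-style claim, which I would evaluate as follows. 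Writing $v\equiv v(s)$ and using \eqref{eq:psi.rb} and \eqref{eq:vM} we have $\psi_H^\rb(X_T,\<X\>_T;s)=\ee^{-\ii vH}M_T$; since $M$ is a martingale and $\tau_H^*=\tau_H\wedge T$ is a bounded stopping time, optional sampling gives
\begin{align}
\Eb_{\tau_H^*}\psi_H^\rb(X_T,\<X\>_T;s)
  &=\ee^{-\ii vH}\,M_{\tau_H^*}
  =\ee^{\ii v(X_{\tau_H^*}-H)+\ii s\<X\>_{\tau_H^*}} .
\end{align}
Multiplying by $\Ib{\tau_H\leq T}$ and using $X_{\tau_H^*}=H$ on that event cancels the factor involving $v$; then applying $(-\ii\d_s)^m$, which commutes with the $s$-free indicator and passes through $\Eb_{\tau_H^*}$ by the Leibniz rule, gives
\begin{align}
\Ib{\tau_H\leq T}\,\Eb_{\tau_H^*}(-\ii\d_s)^m\psi_H^\rb(X_T,\<X\>_T;s)
  &=\Ib{\tau_H\leq T}\,\<X\>_{\tau_H^*}^m\,\ee^{\ii s\<X\>_{\tau_H^*}} ,
\end{align}
which is exactly the rebate payoff. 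Thus at $\tau_H$ the European-style claim can be sold for precisely the rebate amount; since interest rates are zero the timing of this sale is immaterial, and the short knock-out position simply expires worthless. The remaining self-financing bookkeeping is identical to that in the proofs of Theorems \ref{thm:sbko} and \ref{thm:sbki}.

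The step I expect to require the most care is the differentiation of $\psi_H^\rb$ in $s$ and the interchange of $(-\ii\d_s)^m$ with $\Eb_{\tau_H^*}$. Because $v_\pm(s)=\ii(-\tfrac{1}{2}\pm\sqrt{\tfrac{1}{4}-2\ii s})$ has a square-root branch point exactly at $s=-\ii/8$, where $\tfrac{1}{4}-2\ii s=0$, the $s$-derivatives of $\psi_H^\rb$ involve $\d_s v=\pm(\tfrac{1}{4}-2\ii s)^{-1/2}$ and its higher derivatives, all of which blow up there; the hypothesis $s\neq-\ii/8$ is imposed precisely to rule this out. Away from that point $v$ and all its $s$-derivatives are locally bounded, so there is a constant $c_1>0$ with $|\d_s^k\psi_H^\rb(X_T,\<X\>_T;s)|\leq c_1\ee^{c_1(|X_T|+\<X\>_T)}$ locally uniformly in $s$, while $\Eb_0 c_1\ee^{c_1(|X_T|+\<X\>_T)}<\infty$ by \eqref{eq:bound} (cf.\ \eqref{eq:leibniz}); dominated convergence then legitimizes the interchange. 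Everything else is routine bookkeeping.
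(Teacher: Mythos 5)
Your proof is correct and follows essentially the same route as the paper: it treats the offsetting case $\{\tau_H>T\}$ directly, then uses Lemma \ref{lem:martingale} together with optional sampling at the bounded stopping time $\tau_H^*$ and the identity $\Ib{\tau_H\leq T}(X_{\tau_H^*}-H)=0$ on $\{\tau_H\leq T\}$, commuting $(-\ii\d_s)^m$ with $\Eb_{\tau_H^*}$ via the Leibniz rule under the condition $s\neq-\ii/8$. The only cosmetic difference is that you apply the martingale identity at the $m=0$ level first and then differentiate, whereas the paper pulls $(-\ii\d_s)^m$ out of $\Eb_{\tau_H^*}$ before invoking the lemma — the same ingredients in a slightly different order.
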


\begin{proof}
If $\tau_H > T$ the rebate claim expires worthless.  Likewise, if $\tau_H > T$, the long position in the European-style claim pays $(-\ii \d_s)^m\psi_H^\rb(X_T,\<X\>_T;s)$ while the short position in the single barrier knock-out claim pays $-(-\ii \d_s)^m\psi_H^\rb(X_T,\<X\>_T;s)$, for a net payout of zero.

If $\tau_H \leq T$, the knock-out claim becomes worthless at time $\tau_H^*$.  Thus it remains to show that the value of the European-style claim equals the payoff of the rebate claim at time $\tau_H^*$.  We have
\begin{align}
& \Ib{\tau_H \leq T} \Eb_{\tau_H^*} (-\ii \d_s)^m \psi_H^\rb(X_T,\<X\>_T;s) \\
	&=	\Ib{\tau_H \leq T} (-\ii \d_s)^m  \ee^{-\ii v H} \Eb_{\tau_H^*} \ee^{\ii v X_T + \ii s \<X\>_T} &
	&		\text{(by \eqref{eq:psi.rb} and Leibniz)} \\
	&=	\Ib{\tau_H \leq T} (-\ii \d_s)^m  \ee^{-\ii v H} \ee^{\ii v X_{\tau_H^*} + \ii s \<X\>_{\tau_H^*}} &
	&		\text{($M_t:=\ee^{\ii v X_t + \ii s \<X\>_t}=\Eb_t M_T$ is a martingale)} \\
	&=	\Ib{\tau_H \leq T} (-\ii \d_s)^m \ee^{\ii s \<X\>_{\tau_H^*}} &
	&		\text{($\Ib{\tau_H \leq T}(X_{\tau_H^*}-H) = 0$)} \\
	&=	\Ib{\tau_H \leq T} \<X\>_{\tau_H^*}^m \ee^{\ii s \<X\>_{\tau_H^*}} , \label{eq:proof.sbr}
\end{align}
where use of the Leibniz integral rule is justified by \eqref{eq:leibniz}.
\end{proof}

\begin{proposition}[Prices of single barrier rebate power-exponential claims]
\label{prp:sbrb-pow-exp}
Assume the distribution of $X_T$ has no point masses (a sufficient condition is that $\int_0^T \sig_t^2 \dd t > \eps > 0$).  Then for any $L < X_0$, $k \in \{0\} \cup \Nb$ and $s \in \Cb \setminus \{-\ii/8\}$, we have
\begin{align}
\Eb \Ib{\tau_L \leq T} \<X\>_{\tau_L^*}^k \ee^{\ii s \<X\>_{\tau_L^*}}
	&=	\lim_{n \to \infty} \Eb \Big( g_n(X_T) + h_n(X_T) \Big) , \label{eq:sbr-main}
\end{align}
where the functions $g_n$ and $h_n$ are given by
\begin{align}
g_n(X_T)
	&= \int_\Rb  (-\ii \d_s)^k \Hh_n(v(s)-\om) \ee^{-\ii \om L + \ii (\om-u(\om,s)X_0 + \ii u(\om,s) X_T}\dd \om_r , \\
h_n(X_T)
	&=	\int_\Rb(-\ii \d_s)^k \Hh_n(- \ii - v(s) - \om) \ee^{-\ii \om L + \ii (\om-u(\om,s)X_0 + \ii u(\om,s) X_T} \dd \om_r, 
\end{align}
with $\Hh_n$ as defined in \eqref{eq:Hhat}.
Here, the contour of integration in $g_n$ is chosen so that $2n + \Im v(s) > \om_i >  \Im v(s)$ and $2 \ii s-\omega ^2-\ii \om +\tfrac{1}{4} \neq 0$ and
the contour of integration in $h_n$ is chosen so that $2n - 1 - \Im v(s) > \om_i > - 1 - \Im v(s)$ and $2 \ii s-\omega ^2-\ii \om +\tfrac{1}{4} \neq 0$.
\end{proposition}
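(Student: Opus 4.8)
The plan is to combine Theorem~\ref{thm:sbr} with Theorem~\ref{thm:sbko} to rewrite the left-hand side of \eqref{eq:sbr-main} as the price of a European claim, and then to run the Fourier--Fubini argument from the proof of Proposition~\ref{prp:sbko-pow-exp} essentially verbatim. First, since the value of any claim equals the value of its replicating portfolio, Theorem~\ref{thm:sbr} (applied with $m=k$ and $H=L$) gives
\begin{align}
\Eb \Ib{\tau_L \leq T} \<X\>_{\tau_L^*}^k \ee^{\ii s \<X\>_{\tau_L^*}}
	&=	\Eb (-\ii \d_s)^k \psi_L^\rb(X_T,\<X\>_T;s) - \Eb \Ib{\tau_L>T} (-\ii \d_s)^k \psi_L^\rb(X_T,\<X\>_T;s) ,
\end{align}
with $\psi_L^\rb$ as in \eqref{eq:psi.rb}. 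I would then recognize the second term as the price of a single barrier knock-out claim with barrier $L<X_0$ and base payoff $\varphi(x,w)=\ee^{\ii v(s)(x-L)+\ii s w}$ (the operator $(-\ii\d_s)^k$ can be pulled outside the expectation by the Leibniz rule, justified by \eqref{eq:leibniz}), so Theorem~\ref{thm:sbko} replaces it by the price of the European-style claim $(-\ii\d_s)^k\varphi_L^\ko$ with $\varphi_L^\ko$ as in \eqref{eq:phi.sbko}. Using $\varphi(2L-x,w)=\ee^{\ii v(s)(L-x)+\ii s w}$, subtracting, and invoking the no-point-mass hypothesis to identify $1-\Ib{X_T>L}=\Ib{X_T<L}$ almost surely, the left-hand side of \eqref{eq:sbr-main} becomes $\Eb(-\ii\d_s)^k\big(\Ib{X_T<L}\ee^{\ii v(s)(X_T-L)+\ii s\<X\>_T}+\Ib{X_T<L}\ee^{(1-\ii v(s))(X_T-L)+\ii s\<X\>_T}\big)$, i.e.\ a sum of two European power-exponential-type payoffs supported on $\{X_T<L\}$ (hence the $g_n+h_n$, not $g_n-h_n$, in \eqref{eq:sbr-main}).

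Each of these two summands I would handle exactly as the two Heaviside terms in the proof of Proposition~\ref{prp:sbko-pow-exp}. Replace $\Ib{X_T<L}$ by $\lim_{n\to\infty}H_n(L-X_T)$ with $H_n$ the smooth Heaviside of \eqref{eq:heaviside}; pass the limit outside $\Eb$ by dominated convergence (the dominating functions are built as in Proposition~\ref{prp:sbko-pow-exp}, using on $\{X_T<L\}$ that $\ee^{(1-\ii v(s))(X_T-L)}$ is bounded by $1$ for the branch of $v_\pm(s)$ with $\Re(1-\ii v(s))\ge 0$, and $H_n\le H_1$ together with \eqref{eq:bound} otherwise); write $H_n(L-X_T)=\int_\Rb\dd k_r\,\Hh_n(k)\ee^{\ii k(L-X_T)}$ on the strip $-2n<k_i<0$ on which $\Hh_n$ of \eqref{eq:Hhat} is defined; exchange integral and expectation by Fubini (licensed by $|\Hh_n(k)|=\Oc(\ee^{-\pi|k_r|/(2n)})$ and the boundedness of $\Eb\ee^{\ii\om X_T+\ii s\<X\>_T}$); apply Theorem~\ref{thm:E1=E2} to the remaining characteristic function; exchange back by Fubini; and finally restore $(-\ii\d_s)^k$ by the Leibniz rule. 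For the first summand the frequency fed into Theorem~\ref{thm:E1=E2} is $v(s)-k$, and the substitution $\om:=v(s)-k$ (under which $k_i\in(-2n,0)$ becomes $\om_i\in(\Im v(s),2n+\Im v(s))$) produces $\Eb g_n(X_T)$. For the second summand, rewriting $(1-\ii v(s))(X_T-L)=\ii(-\ii-v(s))(X_T-L)$ shows the frequency is $-\ii-v(s)-k$, and the substitution $\om:=-\ii-v(s)-k$ (under which $\om_i\in(-1-\Im v(s),2n-1-\Im v(s))$) produces $\Eb h_n(X_T)$; in both cases the contour is constrained further to avoid the at most two points with $2\ii s-\om^2-\ii\om+\tfrac{1}{4}=0$, so that $u(\om,s)$ is well defined. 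Summing the two yields \eqref{eq:sbr-main}.

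The only substantive difficulty is justifying the interchanges of limit, derivative, integral, and expectation, and these all parallel the proof of Proposition~\ref{prp:sbko-pow-exp}; in particular, smoothing the Heaviside is essential for the same reason as in Remark~\ref{rmk:heaviside}, namely that the Fourier transform of the sharp Heaviside does not decay fast enough along horizontal contours to license the second application of Fubini. The restriction $s\neq-\ii/8$ is simply inherited from Theorem~\ref{thm:sbr}; it is the value at which the discriminant in the definition \eqref{eq:vM} of $v_\pm(s)$ vanishes. I expect the fussiest bookkeeping to be keeping the branch $v_\pm(s)$ fixed throughout and tracking the two contour shifts, but no idea beyond what already appears in the proof of Proposition~\ref{prp:sbko-pow-exp} is required.
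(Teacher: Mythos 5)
Your proposal is correct and follows essentially the same route as the paper: decompose via Theorem~\ref{thm:sbr}, replace the knock-out term via Theorem~\ref{thm:sbko}, simplify algebraically to two payoffs supported on $\{X_T < L\}$, smooth the indicator with $H_n(L-X_T)$, Fourier-expand, and apply Theorem~\ref{thm:E1=E2} and Fubini, with the contour shifts $\om=v(s)-k$ and $\om=-\ii-v(s)-k$ producing exactly the stated strips. Your contour bookkeeping is in fact cleaner than the paper's (whose final displayed line writes $H_n(X_T-L)$ where consistency with the preceding line and the stated contour for $h_n$ requires $H_n(L-X_T)$, as you use), and your handling of the Leibniz rule and dominated-convergence justifications matches the paper's pattern from Proposition~\ref{prp:sbko-pow-exp}.
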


\begin{proof}
As in the proof of Proposition \ref{prp:sbko-pow-exp}, let $H(x):=\tfrac{1}{2}(1 + \text{sgn} \, x)$ denote the Heaviside function and let $H_n(x) := \tfrac{1}{2} ( 1 + \tanh n x )$.  Then
\begin{align}
&\Eb \Ib{\tau_L \leq T} \<X\>_{\tau_L^*}^k \ee^{\ii s \<X\>_{\tau_L^*}} \\
	&=	(-\ii \d_s)^k \Eb \Ib{\tau_L \leq T} \ee^{\ii s \<X\>_{\tau_L^*}} \\
	&=	(-\ii \d_s)^k \Eb\Big( \psi_L^\rb(X_T,\<X\>_T;s) - \Ib{\tau_L>T} \psi_L^\rb(X_T,\<X\>_T;s) \Big) \\
	&=	 (-\ii \d_s)^k \Eb \Big(  \psi_L^\rb(X_T,\<X\>_T;s) 
			- \Ib{X_T > L} \psi_L^\rb(X_T,\<X\>_T;s) + \Ib{X_T < L} \ee^{X_T - L} \psi_L^\rb(2L-X_T,\<X\>_T;s) \Big) \\
	&=	(-\ii \d_s)^k \Eb \Big( \Ib{X_T \leq L} \psi_L^\rb(X_T,\<X\>_T;s) + \Ib{X_T < L} \ee^{X_T - L} \psi_L^\rb(2L-X_T,\<X\>_T;s) \Big) \\
	&=	\Eb \Big(  \Ib{X_T \leq L} (-\ii \d_s)^k  \psi_L^\rb(X_T,\<X\>_T;s) + \Ib{X_T < L} \ee^{X_T - L} (-\ii \d_s)^k \psi_L^\rb(2L-X_T,\<X\>_T;s) \Big) \\
	&=	\Eb \lim_{n \to \infty } H_n(L-X_T) (-\ii \d_s)^k \Big( \psi_L^\rb(X_T,\<X\>_T;s) + \ee^{X_T - L} \psi_L^\rb(2L-X_T,\<X\>_T;s) \Big) \\
	&=	\lim_{n \to \infty } (-\ii \d_s)^k \Eb H_n(L-X_T) \Big( \psi_L^\rb(X_T,\<X\>_T;s) + \ee^{X_T - L} \psi_L^\rb(2L-X_T,\<X\>_T;s) \Big) , \label{eq:sbr-1}
\end{align}
where the second equality follows from Theorem \ref{thm:sbr},
the third equality follows from Theorem \ref{thm:sbko}, 
the fourth equality is algebra,
the sixth equality follows from the fact that the distribution of $X_T$ has no point masses (by assumption)
and the various exchanges of limits, derivatives and expectations are allowed by Lebesgue's dominated convergence and the Leibniz integral rule.
Using the expression \eqref{eq:psi.rb} for $\psi_L^\rb$ and the fact that $\Fv[H_n]=\Hh_n$ we have
\begin{align}
&(-\ii \d_s)^k \Eb H_n(L-X_T) \psi_L^\rb(X_T,\<X\>_T;s) \\
	&=	(-\ii \d_s)^k \Eb \int_\Rb \Hh_n(v(s) - \om) \ee^{-\ii \om L} \ee^{\ii \om X_T + \ii s\<X\>_T}\dd \om_r  &
	&		(2n + \Im v(s) > \om_i >  \Im v(s))  \\
	&=	(-\ii \d_s)^k \int_\Rb \Hh_n(v(s) - \om) \ee^{-\ii \om L}  \Eb \ee^{\ii \om X_T + \ii s\<X\>_T}\dd \om_r   &
	&		\text{(Fubini)}\\
	&=	(-\ii \d_s)^k \int_\Rb\Hh_n(v(s) - \om) \ee^{-\ii \om L + \ii (\om-u(\om,s)X_0} \Eb \ee^{\ii u(\om,s) X_T} \dd \om_r &
	&		\text{(by \eqref{eq:E1=E2})}\\
	&=	(-\ii \d_s)^k \Eb \int_\Rb\Hh_n(v(s) - \om) \ee^{-\ii \om L + \ii (\om-u(\om,s)X_0 + \ii u(\om,s) X_T}\dd \om_r  &
	&		\text{(Fubini)}\\
	&=	\Eb g_n(X_T) , \label{eq:sbr-2}
\end{align}
where the two applications of Fubini's theorem and the use of the Leibniz integral rule are justified by the 
$|\om_r| \to \infty$ behaviors: 
$|\d_s^k \Hh_n(v(s)-\om)| = \Oc( \ee^{ - |\om_r|/n })$
and
\begin{align} 
\Eb |\ee^{-\ii \om L} \ee^{\ii \om X_T + \ii s\<X\>_T}| 
	&= \Oc(1)  , &
\Eb |\d_s^k \ee^{-\ii \om L + \ii (\om-u(\om,s)X_0 + \ii u(\om,s) X_T}| 
	&= \Oc(1) ,
\end{align}
and by the choice of the integration contour to avoid any singularities in the integrand.   Similarly,
\begin{align}
(-\ii \d_s)^k \Eb H_n(X_T-L) \ee^{X_T-L}\psi_L^\rb(2L-X_T,\<X\>_T;s) 
	&=	\Eb h_n(X_T). \label{eq:sbr-3}
\end{align}
The result \eqref{eq:sbr-main} follows from \eqref{eq:sbr-1}, \eqref{eq:sbr-2} and \eqref{eq:sbr-3}.
\end{proof}

\begin{remark}
\label{rmk:sbrb}
To price a single-barrier rebate power-exponential claim $\Ib{\tau_U \leq T} \<X\>_{\tau_U^*}^k \ee^{\ii s \<X\>_{\tau_U^*}}$ with up-barrier $U > X_0$, make the following changes to Proposition \ref{prp:sbrb-pow-exp}: replace
\begin{align}
L 
	&\to U , &
\Hh(v(s)-\om) 
	&\to \Hh(\om-v(s)) , &
\Hh(-\ii - v(s) - \om) 
	&\to \Hh(\om + \ii + v(s))
\end{align}
and reflect the contours of integration over the real axis: $\om_i \to -\om_i$.
\end{remark}

In Figure \ref{fig:sbrb}, for various values of $X_0$ and $H$, we plot the payoff \eqref{eq:sbr-main} whose expectation, in the limit as $n \to \infty$, prices the single barrier rebate variance swap which pays $\Ib{\tau_H \leq T} \<X\>_{\tau_H^*}$.

%
%

\section{Summary and future research}
\label{sec:conclusion}
Assuming only that the price of a risky asset $S=\ee^X$ is strictly positive and continuous and driven by an independent volatility process $\sig$, we have shown how to price and hedge a variety of barrier-style claims written on the $\log$ returns $X$ and the quadratic variation of $\log$ returns $\<X\>$.  In particular, we have studied single and double barrier knock-in, knock-out, and rebate claims.  The pricing formulas we obtain are semi-robust in that they do not specify the 
dynamics of $\sigma$, which may be non-Markovian and may include jumps. 	
\par
Future research will focus three areas
(i) weakening the independence assumption on $\log$ returns and volatility,
(ii) pricing and hedging when calls and puts are available only at discrete strikes or only within a finite interval,
(iii) considering richer payoff structures, which may depend on the running maximum or minimum of the asset in addition to $\log$ returns and quadratic variation of $\log$ returns.

%
%

\bibliographystyle{chicago}
\bibliography{Bibtex-Master-3.05}	

%
%

\clearpage

\begin{figure}
\centering
\begin{tabular}{ | c | c |}
\hline
\includegraphics[width=0.45\textwidth]{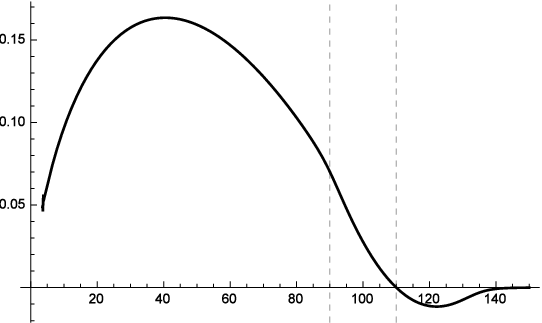} &
\includegraphics[width=0.45\textwidth]{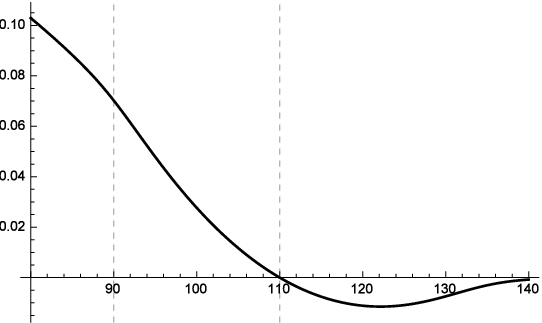} \\ \hline
\end{tabular}
\caption{
We plot the payoff $g_n(\log \cdot) - h_n(\log \cdot)$, whose expectation, in the limit as $n \to \infty$, prices a single barrier knock-out power-exponential claim; see equation \eqref{eq:sbko-main}.  In both plots, the following parameters are fixed: $L = \log 90$, $X_0 = \log 110$, $p=0$, $s=0$, $j=0$, $k=1$, $n = 25$.  
The vertical dashed lines are placed at $\ee^L = 90$ and $S_0 = \ee^{X_0} = 110$.  
Note that with $(p,s,j,k)$ as chosen, the European payoff function plotted above has expectation which, in the large $n$ limit, prices a single barrier knock-out variance swap, which pays $\Ib{\tau_L > T} \<X\>_T$.
}
\label{fig:sbko}
\end{figure}

\begin{figure}
\centering
\begin{tabular}{ | c | c |}
\hline
\includegraphics[width=0.45\textwidth]{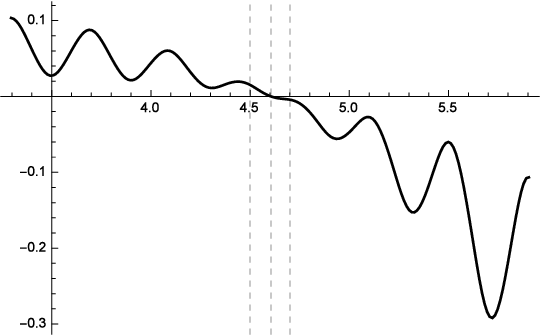} &
\includegraphics[width=0.45\textwidth]{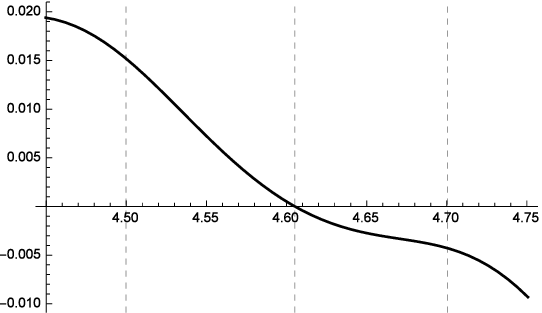} \\ \hline
\end{tabular}
\caption{
We plot the payoff appearing on the right-hand side of \eqref{eq:dbko-pow-exp}, whose expectation, in the limit as $q,m \to \infty$, prices a double barrier knock-out power-exponential claim.  In both plots, the following parameters are fixed: $L = \log 90$, $U = \log 110$, $X_0 = \log 100$, $p=0$, $s=0$, $j=0$, $k=1$, $m = 15$ and $q=5$.
The vertical dashed lines are placed at $L = 90$, $X_0 = 100$ and $U = \log 110$.  
Note that with $(p,s,j,k)$ as chosen, the European payoff function plotted above has expectation which, in the large $q,m$ limit, prices a double barrier knock-out variance swap, which pays $\Ib{\tau_{L,U} > T} \<X\>_T$.
}
\label{fig:dbko}
\end{figure}

\begin{figure}
\centering
\begin{tabular}{ | c | c |}
\hline
\includegraphics[width=0.45\textwidth]{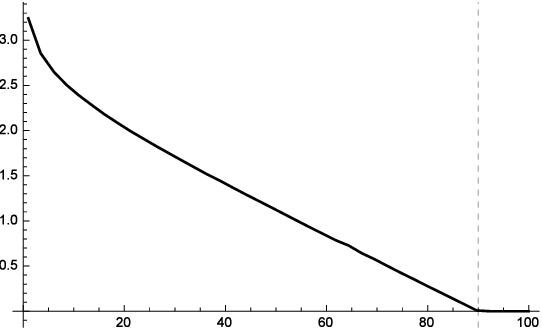} &
\includegraphics[width=0.45\textwidth]{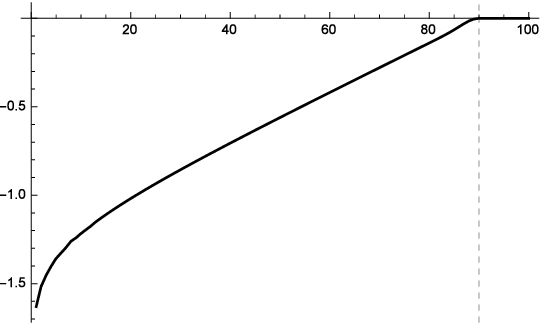} \\ \hline
\end{tabular}
\caption{
\emph{Left}: we plot the payoff $g(\log \cdot)$, given in \eqref{eq:g.frac.sbki}, whose expectation prices a single-barrier knock-in claim on volatility with payoff \eqref{eq:VolSwap}.
\emph{Right}: we plot the payoff $g(\log \cdot)$, given in \eqref{eq:g.ratio.sbki}, whose expectation prices a single-barrier knock-in claim on realized Sharpe ratio with payoff \eqref{eq:SharpeRatio}.
In both plots, the following parameters are fixed: $\ee^L = 90$ and $r=1/2$.  For the ratio claim, we have additionally fixed $\eps = 0.001$.
The vertical line in both plots are placed at the knock-in barrier $\ee^L = 90$.
}
\label{fig:sbki}
\end{figure}

\begin{figure}
\centering
\begin{tabular}{ | c | c |}
\hline
\includegraphics[width=0.45\textwidth]{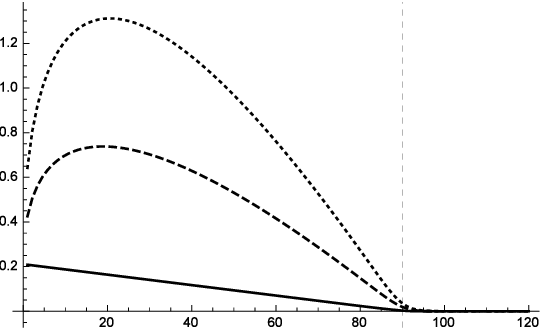} &
\includegraphics[width=0.45\textwidth]{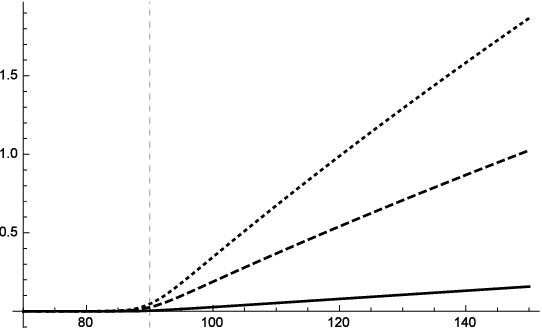} \\ \hline
\end{tabular}
\caption{
We plot the payoff $g_n(\log \cdot) + h_n(\log \cdot)$, given by \eqref{eq:sbr-main} (see also Remark \ref{rmk:sbrb}) whose expectation, in the limit as $n \to \infty$, prices a single barrier rebate power-exponential claim.
\textit{Left}: the solid, dashed, and dotted lines correspond to $\ee^{X_0} = \{100,100^{1.25},100^{1.50}\}$, respectively, and the other parameters are fixed: $\ee^L = 90$, $s=0$, $k=1$, $n = 25$.  The vertical dashed line is placed at the barrier $\ee^L = 90$.
\textit{Right}: the solid, dashed, and dotted lines correspond to $\ee^{X_0} = \{80,80^{2/3},80^{1/3}\}$, respectively, and the other parameters are fixed: $\ee^U = 90$, $s=0$, $k=1$, $n = 25$.  The vertical dashed line is placed at the barrier $\ee^U = 90$.
Note that with $(k,s)$ as chosen, the European payoffs plotted above have expectations which price, in the limit as $n \to \infty$, single barrier rebate variance swaps, all of which have a payoff $\Ib{\tau_H \leq T} \<X\>_{\tau_H^*}$.  
}
\label{fig:sbrb}
\end{figure}

\end{document}